\newtheorem{theorem}{Theorem}
\newtheorem{lemma}{Lemma}
\newtheorem{corollary}{Corollary}
\newtheorem{proposition}{Proposition}
\newtheorem{remark}{Remark}
\newtheorem{definition}{Definition}
\numberwithin{theorem}{section}
\numberwithin{lemma}{section}
\numberwithin{corollary}{section}
\numberwithin{proposition}{section}
\numberwithin{remark}{section}
\newcommand{\spann}{{\rm span}}
\newcommand{\supp}{{\rm supp}}
\newcommand{\Deg}{{\rm deg}}
\newcommand{\dist}{{\rm dist}}
\title{{\Large {\bf Relation between quantum walks with tails and quantum walks with sinks on finite graphs}}
\author{
{\small Norio Konno}\\
{\scriptsize Department of Applied Mathematics, Yokohama National University, }\\  
{\scriptsize Hodogaya, Yokohama, 240-8501, Japan. }\\
{\small Etsuo Segawa\footnote{Email: segawa-etsuo-tb@ynu.ac.jp}}\\
{\scriptsize Graduate School of Environment Information Sciences, Yokohama National University, } \\
{\scriptsize Hodogaya, Yokohama, 240-8501, Japan. }\\
{\small Martin \v{S}tefa\v{n}\'{a}k}\\
{\scriptsize Department of Physics, Faculty of Nuclear Sciences and Physical Engineering, Czech Technical University in Prague, }\\
{\scriptsize B\v{r}ehov\'{a} 7, 115 19 Praha 1 - Star\'{e} M\v{e}sto, Czech Republic}
}
%{\scriptsize }\\
}
\date{\empty }
\begin{document}
\maketitle

\par\noindent
\begin{small}
\par\noindent
{\bf Abstract}. We connect the Grover walk with sinks to the Grover walk with tails. The survival probability of the Grover walk with sinks in the long time limit is characterized by the centered generalized eigenspace of the Grover walk with tails. The centered eigenspace of the Grover walk is the attractor eigenspace of the Grover walk with sinks. It is described by the persistent eigenspace of the underlying random walk whose support has no overlap to the boundaries of the graph and combinatorial flow in the graph theory.

%\footnote[0]{
%{\it Abbr. title:}Construction theorem of the positive support of $n$-th power of the Grover walk
%}
%\footnote[0]{
%{\it AMS 2000 subject classifications: }
%60F05, 05C50, 15A15, 05C60
%%}
%\footnote[0]{
%{\it PACS: } 
%03.67.Lx, 05.40.Fb, 02.50.Cw
%}
\footnote[0]{
{\it Keywords: } 
Quantum walk, survival probability, attractor eigenspace, dressed photon 
}
\end{small}

\section{Introduction}
A simple random walker on a finite and connected graph starting from any vertex hits an arbitrary vertex in a finite time. This fact implies that if we consider a subset of the vertices of this graph as sinks, where the random walker is absorbed, then the survival probability of the random walk in the long time limit converges to zero. However, for quantum walks (QW) \cite{Ambainis2003} the situation is more complicated and the survival probability depends in general on the graph, coin operator and the initial state of the walk. For a two-state quantum walk on a finite line with sinks on both ends and a non-trivial coin the survival probability is also zero, as shown by the studies of the corresponding absorption problem~\cite{Ambainis,Konno,Bach,Yamasaki}.
However, for a three-state quantum walk with the Grover coin \cite{IKS}  the survival probability on a finite line is non-vanishing \cite{SNJ} due to the existence of trapped states. These are the eigenstates of the unitary evolution operator which do not have a support on the sinks. Trapped states crucially affect the efficiency of quantum transport \cite{MNJ} and lead to counter-intuitive effects, e.g. the transport efficiency can be improved by increasing the distance between the initial vertex and the sink \cite{MNSJ,MNJ:2020}. We find a similar phenomena to this quantum walk model in the experiment 
on the energy transfer of the dressed photon~\cite{DressedPhoton0} through 
the nanoparticles distributed in a finite three dimensional grid~\cite{DressedPhoton1}. The output signal intensity increases when the depth direction is larger.   
Although when the depth is deeper, 
a lot of ``detours" newly appear to reach to the position of the output from the classical point of view, 
the output signal intensity of the dressed photon becomes stronger.
The existence of trapped states also results in infinite hitting times \cite{Krovi:hypercube,Krovi:infhit}. 

In this paper we analyse such counter-intuitive phenomena for the Grover walk on general connected graph using the spectral analysis. 
The Grover walk is an induced quantum walk of the random walk from the view point of the spectral mapping theorem~\cite{HKSS}.   

To this end, first we connect the Grover walk with sink to the Grover walk with tails.  The tails are the semi-infinite paths attached to a finite and connected graph. We call the set of vertices connecting to the tails the boundary.  The Grover walk with tail is introduced by \cite{FH1,FH2} in terms of  the scattering theory. If we set some appropriate bounded initial state so that the support is included in the tail, the existence of the fixed point of the dynamical system induced by the Grover walk with tails is shown, and the stable generalized eigenspace $\mathcal{H}_s$, in which the dynamical system lives, is orthogonal to the centered generalized eigenspace $\mathcal{H}_c$~\cite{R} at every time  step~\cite{HS}. 
The centered generalized eigenspace is generated by the generalized eigenvectors of the principal submatrix of the time evolution operator of the Grover walk with respect to the internal graph, and all the corresponding absolute values of the eigenvalues are $1$. 
This eigenstate is equivalent to the attractor space~\cite{MNJ} of the Grover walk with sink. 
Indeed, we show that the stationary state of the Grover walk with sink is attracted to this centered generalized eigenstate. 
Secondly, we characterize this centered generalized eigenspace using the persistent eigenspace of the underlying random walk whose supports have no overlaps to the boundary and also using the concept of ``flow" from the graph theory. From this result, we see that the existence of the persistent eigenspace of the underlying random walk influences significantly the asymptotic behavior of the corresponding Grover walk, although it has little effect on the asymptotic behavior of the random walk itself. Moreover, we clarify that the graph structure which constructs the symmetric or anti-symmetric flow satisfying the Kirchhoff's law contributes to the non-zero survival probability of the Grover walk as suggested by \cite{HKSS,MNJ}.    

This paper is organized as follows. 
In section 2, we prepare the notations of graphs and give the definition of the Grover walk and the boundary operators which are related to the chain.  
In section 3, we give the definition of the Grover walk on a graph with sinks. In section 4, a necessary and sufficient condition for the surviving of the Grover walk are described. 
In section 5, we give an example. 
Section 6 is devoted to the relation between the Grover walk with sink and the Grover walk with tail. 
In section 7, we partially characterize  the centered generalized eigenspace using the concept of flow from the graph theory.  
%%%%%%%  
\section{Preliminary}
\subsection{Graph notation}
Let $G=(V,A)$ be a connected and {\it symmetric digraph} such that an arc $a\in A$ if and only if its inverse arc $\overline{a}\in A$. The {\it origin and terminal vertices} of $a\in A$ are denoted by $o(a)\in V$ and $t(a)\in V$, respectively. Assume that $G$ has no multiple arcs. 
If $t(a)=o(a)$, we call such an arc $a$ the {\it self-loop}. In this paper, we regard  $\overline{a}=a$ for any self-loops.  We denote $A_{\sigma}$ as the set of all the self-loops.  
The {\it degree} of $v\in V$ is defined by 
$$
\Deg(v)=|\{a\in A \;|\; t(a)=v\}|.
$$
The {\it support edge} of $a\in A\setminus A_{\sigma}$ is denoted by $|a|$ with $|a|=|\overline{a}|$. 
The set of {\it (non-directed) edges} is 
$$
E=\{|a| \;|\; a\in A\setminus A_{\sigma}\}.
$$ 
A {\it walk} in $G$ is a sequence of arcs such that $p=(a_0,a_1,\dots,a_{r-1})$ with $t(a_j)=o(a_{j+1})$ for any $j=0,\dots,r-2$, which may have the same arcs in $p$. 
The {\it cycle} in $G$ is a subgraph of $G$ which is isomorphic to a sequence of arcs $(a_0,a_1,\dots,a_{r-1})$ ($r\geq 3$) satisfying $t(a_j)=o(a_{j+1})$ with $a_j\neq \overline{a}_{j+1}$ for any $j=0,\dots,r-1$, where the subscript is the modulus of $r$. We identify $(a_{k},a_{k+1},\dots,a_{k+r-1})$ with $(a_0,a_1,\dots,a_{r-1})$ for  $k\in\mathbb{Z}$.
The {\it spannig tree} of $G$ is a connected subtree of $G$ covering all vertices of $G$.
A {\it fundamental cycle} induced by the spanning tree is the cycle in $G$ generated by recovering an arc which is outside of the spanning tree to the spanning tree. 
There are two choices of orientations for each support of the fundamental cycle, but we choose only one of them as the representative.  
Fixing a spanning tree, we denote the set of fundamental cycles by $\Gamma$. Then the cardinality of $\Gamma$ is $|E|-|V|+1=:b_1$.  We call $b_1$ the {\it first Betti number}. 

\subsection{Definition of the Grover walk}
Let $\Omega$ be a discrete set. The vector space whose standard basis is labeled by each element of $\Omega$ is denoted by $\mathbb{C}^\Omega$. The standard basis is denoted by $\delta_\omega^{(\Omega)}$ ($\omega\in \Omega$), i.e., 
    \[ \delta_\omega^{(\Omega)}(\omega')=\begin{cases} 1 & \text{: $\omega=\omega'$,} \\ 0 & \text{: otherwise.} \end{cases} \]
Throughout this paper, the inner product is standard, i.e., 
$$
\langle \psi,\phi\rangle_{\Omega}=\sum_{\omega\in \Omega}\bar{\psi}(\omega)\phi(\omega),
$$
for any $\psi,\phi\in \mathbb{C}^\Omega$, and the norm is defined by 
$$
||\psi||_\Omega=\sqrt{\langle \psi,\psi\rangle_{\Omega}}.
$$
For any $\psi\in \mathbb{C}^{\Omega}$, the support of $\psi$ is defined by 
$$
\supp(\psi):=\{ \omega\in \Omega \;|\; \psi(\omega)\neq 0\}.
$$
For subspaces $M,N\subset \mathbb{C}^{\Omega}$, the relation 
$$
\mathbb{C}^{\Omega}=M\oplus N,
$$
means that $M$ and $N$ are complementary spaces in $\mathbb{C}^{\Omega}$, i.e., for any $f\in \mathbb{C}^{\Omega}$, $g\in M$  and $h\in N$ are uniquely determined such that $f=g+h$; which means if $u'+u''=0$ for some $u'\in \Omega'$ and $u''\in \Omega''$, then $u'$ and $u''$ must be $u'=u''=0$. Note that $\langle g,h \rangle_\Omega\neq 0$ in general, i.e., $M$ and $N$ are not necessarily orthogonal subspaces. Especially in this paper, we treat an operator which is a submatrix of a unitary operator, and we are not ensured that it is a normal operator. 
The vector space describing the whole system of the Grover walk is $\mathbb{C}^A$.
The time evolution operator of the Grover walk on $G$ is defined by 
\[ (U_G\psi)(a)=-\psi(\overline{a})+\frac{2}{\deg(o(a))}\sum_{t(b)=o(a)}\psi(b) \]
for any $\psi\in \mathbb{C}^A$ and $a\in A$. Note that since $U_G$ is a unitary operator on $\mathbb{C}^A$, $U_G$ preserves the $\ell^2$ norm, i.e., $||U_G\psi||_A^2=||\psi||_A^2$. 
Let $\psi_n\in \mathbb{C}^A$ be the $n$-th iteration of the Grover walk $\psi_n=U_G\psi_{n-1}$ ($n\geq 1$) with the initial state $\psi_0$.  
Then the probability distribution at time $n$, $\mu_n: V\to [0,1]$, can be defined by 
    \[ \mu_n(v)=\sum_{t(a)=v}|\psi_n(a)|^2 \]
if the norm of the initial state is unity.  
Our interest is the asymptotic behavior of the sequence of probabilities $\mu_n$ and also of amplitudes $\psi_n$ on the graph comparing with the behavior of the corresponding random walk. 
%%%%%%
%%%%%%
\subsection{Boundary operators}
Let $G=(V,A)$ be the original graph. The set of sinks is denoted by $V_s\subset V$. The subgraph of $G$; $G_0=(V_0,A_0)$, is defined by  
\[ V_0=V\setminus V_s,\;A_0=\{a\in A \;|\; t(a),o(a)\notin V_s\}. \]
The set of self-loops in $G_0$ is denoted by $A_{0,\sigma}\subset A_0$. 
See Fig~\ref{Fig:2}. 
The set of the fundamental cycles in $G_0$ is denoted by $\Gamma$ hereafter. 
The set of boundary vertices of $G_0$ is defined by \[ \delta G_0=\{ o(a) \;|\; a\in A,\; o(a)\in V\setminus V_s,\; t(a)\in V_s \}.  \]
Under the above settings of graphs, let us now prepare some notations to show our main theorem. 
%%%
\begin{definition}
Let $\tilde{d}(u)$ be the degree of $u$ in the original graph $G$.
Let $G_0=(V_0,A_0)$ be the subgraph as above. 
Then the boundary operators $d_1:\mathbb{C}^{A_0}\to \mathbb{C}^{V_0}$ and $\partial_2:\mathbb{C}^{\Gamma}\to \mathbb{C}^{A_0}$ are denoted by  
    \[ (d_1\psi)(v)=\frac{1}{\sqrt{\tilde{d}(v)}}\sum_{t(a)=v}\psi(a),\;\;(\partial_2\Psi)(a)=\sum_{a\in A(c)\subset A_0}\Psi(c), \]
respectively, for any $\psi\in\mathbb{C}^A$, $\Psi\in\mathbb{C}^{\Gamma}$ and $v\in V_0$, $a\in A_0$. Here $A(c)$ is the set of arcs of $c\in \Gamma$.  
\end{definition}
%%%
Note that $\tilde{d}(u)$ is the degree of $G$, so if $u\in \delta G_0$, then $\tilde{d}(u)$ is greater than the degree in $G_0$.  
The adjoint operators of $d_1$ and $\partial_2$ are defined by
    \[ \langle f, d_1\psi \rangle_{V_0} = \langle d_1^*f, \psi \rangle_{A_0},\quad
    \langle \psi, \partial_2\Psi \rangle_{A_0} = \langle \partial_2^*\psi, \Psi \rangle_\Gamma \]
which imply
    \[(d_1^*f)(a)=f(t(a)),\; (\partial_2^*\psi)(c)=\sum_{a\in A(c)}\psi(a). \]
Let $S: \mathbb{C}^{A_0}\to \mathbb{C}^{A_0}$ be a unitary operator defined by $(S\psi)(a)=\psi(\overline{a})$.
We prove that the composition of $d_1(I-S)\circ \partial_2$ is identically equal to zero as follows.  
%%%
\begin{lemma}\label{lem:boundaries}
Let $d_1$ and $\partial_2$ be the above. Then we have 
    \[ d_1(I-S)\partial_2=0. \]
\end{lemma}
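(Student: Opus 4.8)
The plan is to reduce the identity to a statement about a single fundamental cycle and then exploit the fact that a simple cycle enters and leaves each of its vertices exactly once. Since $d_1(I-S)\partial_2$ is linear and the vectors $\{\delta_c^{(\Gamma)}\}_{c\in\Gamma}$ form a basis of $\mathbb{C}^\Gamma$, it suffices to prove $d_1(I-S)\partial_2\,\delta_c^{(\Gamma)}=0$ for each fixed $c\in\Gamma$. First I would unwind the definitions. By the definition of $\partial_2$, the vector $\partial_2\delta_c^{(\Gamma)}$ is the indicator of the arc set $A(c)$: it equals $1$ on the arcs of $c$ (in the chosen orientation) and $0$ elsewhere. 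Applying $I-S$ then yields the antisymmetric ``flow'' $\phi_c:=(I-S)\partial_2\delta_c^{(\Gamma)}$, namely $\phi_c(a)=\mathbf{1}[a\in A(c)]-\mathbf{1}[\overline a\in A(c)]$, so that $\phi_c(\overline a)=-\phi_c(a)$. The key structural observation is that a fundamental cycle is a simple cycle with $r\ge 3$ traversed in one fixed direction, so $A(c)$ never contains both an arc and its inverse and contains no self-loop; hence $\phi_c$ is well defined and takes values in $\{+1,-1,0\}$.

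Next I would evaluate $d_1\phi_c$ at an arbitrary vertex $v\in V_0$. By definition,
\[
(d_1\phi_c)(v)=\frac{1}{\sqrt{\tilde d(v)}}\sum_{t(a)=v}\phi_c(a)
=\frac{1}{\sqrt{\tilde d(v)}}\Big(\sum_{t(a)=v}\mathbf{1}[a\in A(c)]-\sum_{t(a)=v}\mathbf{1}[\overline a\in A(c)]\Big).
\]
The first sum counts the arcs of $c$ whose terminal vertex is $v$, i.e. the number of times $c$ enters $v$. In the second sum the condition $t(a)=v$ is the condition $o(\overline a)=v$, so after substituting $b=\overline a$ it counts the arcs of $c$ whose origin is $v$, i.e. the number of times $c$ leaves $v$. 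Because $c$ is a simple cycle, if $v$ lies on $c$ then it is entered exactly once and left exactly once, so both counts equal $1$ and cancel; if $v$ does not lie on $c$ both counts are $0$. In either case $(d_1\phi_c)(v)=0$, and since $v$ was arbitrary we conclude $d_1\phi_c=0$. By linearity this gives $d_1(I-S)\partial_2=0$.

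Conceptually this is the homological identity ``the boundary of a boundary is zero'': the composite $(I-S)\partial_2$ sends each cycle to the associated divergence-free $1$-chain, a flow obeying Kirchhoff's conservation law at every vertex, and $d_1$ is, up to the positive weights $1/\sqrt{\tilde d(v)}$, the incidence/divergence map, so applying the divergence to a flow built from closed cycles necessarily returns zero. The only point requiring care is the bookkeeping in the $d_1 S$ term: one must correctly translate the condition $t(a)=v$ for the reversed arc into $o(a)=v$, so that the ``enters $v$'' and ``leaves $v$'' counts are matched. Once that reversal is handled the cancellation is immediate, and the vertex weights $1/\sqrt{\tilde d(v)}$, being nonzero scalars, play no role in the vanishing.
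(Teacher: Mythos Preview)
Your proof is correct and follows essentially the same approach as the paper: reduce by linearity to a single basis vector $\delta_c^{(\Gamma)}$, compute $(I-S)\partial_2\delta_c^{(\Gamma)}$ as the signed indicator $\mathbf{1}[a\in A(c)]-\mathbf{1}[\overline a\in A(c)]$, and observe that applying $d_1$ yields zero because for a cycle the arcs entering each vertex match the arcs leaving it. The paper phrases this last step algebraically as the cancellation of $\sum_{a\in A(c)}\tilde d(t(a))^{-1/2}\delta_{t(a)}$ against $\sum_{a\in A(c)}\tilde d(t(\overline a))^{-1/2}\delta_{t(\overline a)}$, which is exactly your ``enters $=$ leaves'' count.
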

%%%
\begin{proof}
For any $c\in \Gamma$, let $\delta_c^{(\Gamma)}\in \mathbb{C}^{\Gamma}$ be the delta function, i.e., 
    \[ \delta_c^{(\Gamma)}(c')=\begin{cases} 1 & \text{: $c=c'$,}\\ 0 & \text{: $c\neq c'$.}\end{cases} \] 
Then it is enough to see that $d_1(I-S)\partial_2\delta_c^{(\Gamma)}=0$ for any $c\in \Gamma$. Indeed, we find 
    \begin{align*}
        d_1(I-S)\partial_2\delta_c^{(\Gamma)} 
        &= d_1(\sum_{a\in A(c)}\delta_a^{(A)}-\sum_{a\in A(c)}\delta_{\overline{a}}^{(A)}) \\
        &= \sum_{a\in A(c)}\frac{1}{\sqrt{\tilde{d}(t(a))}} \delta_{t(a)}^{(V)}-\sum_{a\in A(c)}\frac{1}{\sqrt{\tilde{d}(t(\overline{a}))}}\delta_{t(\overline{a})}^{(V)} \\
        &= 0,
    \end{align*}
which is the desired conclusion. 
\end{proof}
Let us set the function $\xi_c^{(+)}$ induced by $c\in \Gamma$ by 
     \[ \xi_c^{(+)}:= (I-S)\partial_2\delta_c^{(\Gamma)}.  \]
In other words, $\supp(\xi_c^{(+)})=A(c)\cup A(\bar{c})$ and 
    \[ (\xi_c^{(+)})(a)=\begin{cases} 1 & \text{: $a\in A(c)$,}\\ 
    -1 & \text{: $\overline{a}\in A(c)$,}\\ 0 & \text{: otherwise.}\end{cases} \]
Let us introduce $\chi_S: \mathbb{C}^{A}\to \mathbb{C}^{A_0}$ by  
    \[ (\chi_S\phi)(a)=\phi(a) \]
for all $a\in A_0$. The adjoint $\chi_S^*: \mathbb{C}^{A_0}\to \mathbb{C}^A$ is described by 
    \[ (\chi_S^*f)(a)=\begin{cases} f(a) & \text{: $a\in A_0$,}\\ 0 & \text{: otherwise.} \end{cases} \]
The function $\xi_c^{(+)}$ satisfies the following properties:
%%%%
\begin{proposition}\noindent
For any fundamental cycle $c$ in $G_0\subset G$, we have 
$\chi_S^* \xi_c^{(+)}\in \ker(1-U_G)$.
\end{proposition}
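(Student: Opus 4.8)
The plan is to verify the eigenvalue equation $U_G(\chi_S^*\xi_c^{(+)})=\chi_S^*\xi_c^{(+)}$ directly from the definition of $U_G$, exploiting two structural features of the vector $\psi:=\chi_S^*\xi_c^{(+)}$: that it is anti-symmetric under arc reversal, $\psi(\overline{a})=-\psi(a)$, and that its inflow at every vertex vanishes. Writing out
\[
(U_G\psi)(a) = -\psi(\overline{a}) + \frac{2}{\deg(o(a))}\sum_{t(b)=o(a)}\psi(b),
\]
these two facts collapse the right-hand side to $-\psi(\overline{a})=\psi(a)$, which is exactly the desired identity.

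For the anti-symmetry I would use $S^2=I$, which gives $S(I-S)=-(I-S)$, hence $S\xi_c^{(+)}=S(I-S)\partial_2\delta_c^{(\Gamma)}=-\xi_c^{(+)}$; equivalently, the explicit formula for $\xi_c^{(+)}$ already displays the values $+1$ on $A(c)$ and $-1$ on the reversed arcs. The one point needing care is that passing to the zero-extension $\chi_S^*\xi_c^{(+)}$ on $\mathbb{C}^A$ preserves this relation: since the fundamental cycle $c$ lies in $G_0$, every arc of $\supp(\xi_c^{(+)})$ belongs to $A_0$, and $A_0$ is closed under arc reversal (if $o(a),t(a)\notin V_s$ then $o(\overline{a}),t(\overline{a})\notin V_s$ as well). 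Thus the extension does not break $\psi(\overline{a})=-\psi(a)$ for any $a\in A$.

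For the vanishing inflow the key input is Lemma~\ref{lem:boundaries}: since $d_1\xi_c^{(+)}=d_1(I-S)\partial_2\delta_c^{(\Gamma)}=0$, the weighted inflow $\sum_{t(a)=v}\xi_c^{(+)}(a)$ vanishes at every $v\in V_0$, and because $\psi$ is supported on $A_0$ the unweighted inflow $\sum_{t(b)=w}\psi(b)$ agrees with this sum at each internal vertex. At a sink $w\in V_s$ the inflow is automatically zero, since no arc of the cycle (being an arc of $A_0$) can terminate at $w$. Hence $\sum_{t(b)=o(a)}\psi(b)=0$ for every $a\in A$, so the coin term in $U_G$ drops out regardless of the value of $\deg(o(a))$.

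Combining the two observations yields $(U_G\psi)(a)=-\psi(\overline{a})=\psi(a)$ for all $a$, including arcs disjoint from the cycle, where both sides are zero. I expect the only genuine subtlety to be the boundary bookkeeping: one must confirm that the full-graph degree $\deg(o(a))$ appearing in $U_G$, which can strictly exceed the degree in $G_0$ when $o(a)\in\delta G_0$, causes no trouble. This is resolved precisely because the inflow it multiplies is identically zero, so the mismatch between the degrees in $G$ and in $G_0$ never enters the computation.
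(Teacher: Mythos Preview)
Your proof is correct and follows essentially the same route as the paper: write out the action of $U_G$, use the anti-symmetry $\psi(\overline{a})=-\psi(a)$ to turn $-\psi(\overline{a})$ into $\psi(a)$, and use Lemma~\ref{lem:boundaries} to kill the inflow term. Your treatment is in fact more careful than the paper's about the boundary bookkeeping (the extension to $A$, the sink vertices, and the degree mismatch at $\delta G_0$), but the underlying argument is identical.
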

%%%%
\begin{proof}\noindent
The following direct computation gives the consequence: 
    \begin{align*}
        (U_G\chi_S^*\xi_c^{(+)})(a) &= -(\chi_S^*\xi_c^{(+)})(\overline{a})+\frac{2}{\tilde{d}(o(a))}\sum_{t(b)=o(a)} (\chi_S^*\xi_c^{(+)})(b) \\
        &= (\chi_S^*\xi_c^{(+)})(a) + \frac{2}{\sqrt{\tilde{d}(o(a))}} (d_1\chi_S^*\xi_c^{(+)})(o(a)) \\
        &= (\chi_S^*\xi_c^{(+)})(a).
    \end{align*}
Here the first equality derives from the definition of $U_G$. In the second equality, since $\supp( \xi_c^{(+)})\subset A_0\subset A$ and the summation of RHS in the first equality are essentially the same as the one over $A_0$, we can apply the definition of $d_1$ to this. 
We used Lemma~\ref{lem:boundaries} in the last equality.   
\end{proof}
We set $\mathcal{K}\subset \mathbb{C}^{A_0}$ by 
\begin{equation}\label{eq:defK}
    \mathcal{K}=\mathrm{span}\{\chi_S \xi_c^{(+)} \;|\; c\in \Gamma\subset G_0\}.
\end{equation}  
The self-adjoint operator 
$$
T:=(\chi_S d_1)S(\chi_S d_1)^* 
$$
on $\mathbb{C}^{A_0}$ is isomorphic to the transition probability operator $P'$ with the Dirichlet boundary condition on $\delta V_0$; i.e, 
$$
P'=D^{-1/2}TD^{1/2},
$$
where $(Df)=\tilde{d}(u)f(u)$. 
Here the matrix representation of $P'$ is described by  \[(P')_{u,v}:= \langle \delta_u^{(V_0)},P'\delta_v^{(V_0)}\rangle_{V_0}=\begin{cases} 1/\tilde{d}(u) & \text{: if $u$ and $v$ are connected,}\\ 0 & \text{: otherwise,}\end{cases}\] 
for any $u,v\in V_0$. 
If $Tf=x f$ and $Tg=y g$ ($x\neq y$), then we find the orthogonality such that 
    \[ \langle (1-e^{i \arccos x} S)d_1^*f,(1-e^{-i \arccos y} S)d_1^*g\rangle =0, \]
    \[\langle (1-e^{i \arccos x} S)d_1^*f,(1-e^{i \arccos y} S)d_1^*{g}\rangle =0,\]
    \[\langle (1-e^{i \arccos x} S)d_1^*f,(1-e^{-i \arccos y} S)d_1^*{g}\rangle =0. \]
Then we set $\mathcal{T}\subset \mathbb{C}^{A_0}$ by 
    \begin{equation}\label{eq:RW} 
    \mathcal{T}= \bigoplus_{|\lambda|=1} \{ (1-\lambda S)d_1^*f \;|\; f\in   \ker((\lambda+\lambda^{-1})/2-T),\;\supp(f)\subset V_0\setminus \delta V_0\}.
    \end{equation} 
This is the subspace of $\mathbb{C}^{A_0}$ lifted up from the eigenfunctions in $\mathbb{C}^{V_0}$ of the Dirichlet cut random walk $T$ by $(1-\lambda S)d_1^*f$. 
It is shown that $\mathrm{Spec}(E)\subset \mathbb{D}$ where $\mathbb{D}$ is the unit disc $\{z\in \mathbb{C} \;|\; |z|\leq 1\}$ in Proposition~\ref{prop:stationary}, and $\mathcal{T}=\oplus_{|\lambda|=1,\;\lambda\neq \pm 1}\ker(\lambda-E)$, where $E:=\chi_S U_G \chi_S^*$ in Lemma~\ref{lem:HS}. 
%%%%%%
\section{Definition of the Grover walk on graphs with sinks}
Let $G=(V,A)$ be a finite and connected graph with sinks $V_s=\{v_1,\dots,v_q\}\subset V$. 
We set the graph $G_0=(V_0,A_0)$ by $A(G_0)=\{a\in A \;|\; t(a),o(a)\notin V_s\}$ and $V(G_0)=V\setminus V_s$. 
Assume that $G_0:=G\setminus V_s$ is connected. For simplicity, in this paper we consider the initial state of the Grover walk $\phi_0$ that satisfies the condition $\supp(\phi_0)\subset A_0$.\footnote{If we consider general initial state $\phi_0'$ such that $\supp(\phi_0')\cap (A\setminus A_0)\neq \emptyset$, replacing  $\phi'_0$ into $\phi_0=\phi_1'$, we can reproduce the QW with this initial state after $n\geq 1$ by our setting.  }  
%The subspace generated by $M$ is denoted by  $\mathcal{S}:=\spann\{\delta_a^{(A)} \;|\; t(a)\in \partial V \}\subset \mathbb{C}^A$. Let $\Pi_{\mathcal{S}}$ be the projection operator onto $\mathcal{S}$. 
The time evolution of the Grover walk with sinks $V_s$ with such an initial state $\phi_0$ is defined by
\begin{equation}\label{eq:TEsink} 
\phi_n(a)= \begin{cases}
(U_G\phi_{n-1})(a) & \text{: $a\in V\setminus V_s$,} \\
0 & \text{: $a\in \partial V$,} \end{cases} 
\end{equation}
This means that a quantum walker at a sink falls into a pit trap. 
We are interested in the survival probability of the Grover walk defined by 
    \[ \gamma:= \lim_{n\to\infty}\sum_{a\in A}|\phi_n(a)|^2. \]
It is the probability that the quantum walker remains in the graph without falling into the sinks forever. 
Considering the corresponding isotropic random walk with sinks such that
    \[ p_n(v)=\begin{cases} (Pp_{n-1})(v) & \text{: $v\in V\setminus V_s$,}\\ 0 & \text{: $v\in V_s$,} \end{cases} \]
we find that its survival probability is zero
    \[ \gamma^{RW}:=\lim_{n\to\infty} \sum_{v\in V}p_n(v) = 0, \]
because the first hitting time of a random walk to an arbitrary vertex for a finite graph is finite. 
On the other hand, in the case of the Grover walk the survival probability becomes positive, up to the initial state. 
In this paper, we clarify a necessary and sufficient condition for $\gamma>0$. 
%%%%
\section{Main theorem}
We consider the case study on $G_0$ by 
\begin{description}
    \item[Case A:] $A_{0,\sigma}=\emptyset$ and $G_0$ is a bipartite graph; 
    \item[Case B:] $A_{0,\sigma}=\emptyset$ and $G_0$ is a non-bipartite graph; 
    \item[Case C:] $A_{0,\sigma}\neq \emptyset$ and $G_0\setminus A_{0,\sigma}$ is a bipartite graph;
    \item[Case D:] $A_{0,\sigma}\neq \emptyset$ and $G_0\setminus A_{0,\sigma}$ is a non-bipartite graph.
\end{description}
For a subspace $\mathcal{H}\subset \mathbb{C}^{A_0}$, the projection operator onto $\mathcal{H}$ is denoted by $\Pi_{\mathcal{H}}$. 
Then we obtain the following theorem.
%%%%%%%%%%%%%
\begin{theorem}\label{thm:main}
Let $\phi_n$ be the $n$-th iteration of the Grover walk on $G=(V,A)$ with sinks.
Let the survival probability at time $n$ be defined by 
    \[ \gamma_n=\sum_{a\in A}|(\phi_n)|^2. \]
The subspaces $\mathcal{A},\mathcal{B},\mathcal{C},\mathcal{D}$ of $\mathbb{C}^{A_0}$ are defined in (\ref{A}),...,(\ref{D}), respectively.  
Then we have 
\begin{enumerate}
    \item $\lim_{n\to\infty}\gamma_n=\gamma$ exists;  
    \item The survival probability $\gamma$ is expressed by
    \[ \gamma = || \Pi_{\mathcal{T}}\chi_S\phi_0 ||^2 + || \Pi_{\mathcal{K}}\chi_S\phi_0 ||^2+
    \begin{cases}
    || \Pi_{\mathcal{A}}\chi_S\phi_0 ||^2 & \text{: Case A} \\
    || \Pi_{\mathcal{B}}\chi_S\phi_0 ||^2 & \text{: Case B} \\
    || \Pi_{\mathcal{C}}\chi_S\phi_0 ||^2 & \text{: Case C} \\
    || \Pi_{\mathcal{D}}\chi_S\phi_0 ||^2 & \text{: Case D} 
    \end{cases}
    \]
\end{enumerate}
\end{theorem}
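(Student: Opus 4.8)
The plan is to convert the whole statement into a spectral statement about a single contraction and then read off the limit from its peripheral spectrum. First I would record that the sink dynamics (\ref{eq:TEsink}) is nothing but the iteration of the truncated evolution $E:=\chi_S U_G\chi_S^{*}$ on $\mathbb{C}^{A_0}$: since every $\phi_n$ is supported on $A_0$, one has $\phi_{n}=\chi_S^{*}\chi_S\phi_n$, and hence $\chi_S\phi_n=\chi_S U_G\chi_S^{*}(\chi_S\phi_{n-1})=E\,\chi_S\phi_{n-1}$, so that $\chi_S\phi_n=E^{\,n}\chi_S\phi_0$. Because the zeroed-out boundary arcs contribute nothing, the survival probability at time $n$ is exactly
\[
\gamma_n=\sum_{a\in A}|\phi_n(a)|^2=\|\chi_S\phi_n\|_{A_0}^2=\|E^{\,n}\chi_S\phi_0\|_{A_0}^2 .
\]
Thus both assertions reduce to understanding $\lim_{n\to\infty}\|E^{\,n}\eta\|^2$ for $\eta:=\chi_S\phi_0$.

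The second step is the spectral analysis of $E$, which is a contraction, $\|E\|\le\|\chi_S\|\,\|U_G\|\,\|\chi_S^{*}\|\le 1$, with $\mathrm{Spec}(E)\subset\mathbb{D}$ (Proposition~\ref{prop:stationary}). Because $\|E^{\,n}\|\le 1$ for all $n$, $E$ is power-bounded, so its peripheral spectrum is semisimple and $\mathcal{L}:=\bigoplus_{|\lambda|=1}\ker(\lambda-E)$ is a genuine (not merely generalized) eigenspace. A one-line contraction estimate then pins down the geometry: if $Ev=\lambda v$ with $|\lambda|=1$, then $\langle(I-E^{*}E)v,v\rangle=\|v\|^2-\|Ev\|^2=0$, and $I-E^{*}E\ge 0$ forces $E^{*}v=\bar\lambda v$. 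Hence each peripheral eigenvector is shared by $E$ and $E^{*}$, which yields orthogonality of $\ker(\lambda-E)$ and $\ker(\mu-E)$ for $\lambda\neq\mu$ (from $\lambda\langle v,w\rangle=\langle E^{*}v,w\rangle=\langle v,Ew\rangle=\mu\langle v,w\rangle$) and shows that $\mathcal{L}$ reduces $E$ with $E|_{\mathcal{L}}$ unitary; consequently $\mathcal{M}:=\mathcal{L}^{\perp}$ is $E$-invariant with spectral radius strictly less than $1$.

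With this structure the limit is immediate. Writing $\eta=u+w$ with $u=\Pi_{\mathcal{L}}\eta$ and $w=\Pi_{\mathcal{M}}\eta$, the invariance of $\mathcal{L},\mathcal{M}$ together with $\mathcal{L}\perp\mathcal{M}$ gives $\|E^{\,n}\eta\|^2=\|E^{\,n}u\|^2+\|E^{\,n}w\|^2=\|u\|^2+\|E^{\,n}w\|^2$, using unitarity of $E|_{\mathcal{L}}$. Since the spectral radius of $E|_{\mathcal{M}}$ is strictly below $1$, Gelfand's formula gives $\|E^{\,n}w\|\to 0$; hence $\gamma_n\to\|u\|^2=\|\Pi_{\mathcal{L}}\chi_S\phi_0\|^2$. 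This proves existence of the limit (assertion~1) and identifies $\gamma$ as the squared norm of the projection of $\chi_S\phi_0$ onto the peripheral eigenspace $\mathcal{L}$.

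It remains to resolve $\mathcal{L}$ into the three listed pieces, and this is where I expect the real work to lie. By construction $\mathcal{L}=\bigoplus_{|\lambda|=1}\ker(\lambda-E)$, and the orthogonality of distinct peripheral eigenspaces already gives $\mathcal{T}=\bigoplus_{|\lambda|=1,\ \lambda\neq\pm1}\ker(\lambda-E)$ orthogonal to the $\lambda=\pm1$ part, so that $\|\Pi_{\mathcal{L}}\eta\|^2=\|\Pi_{\mathcal{T}}\eta\|^2+\|\Pi_{\ker(1-E)\oplus\ker(-1-E)}\eta\|^2$. The remaining, genuinely combinatorial, task is to decompose $\ker(1-E)\oplus\ker(-1-E)$ orthogonally into the cycle space $\mathcal{K}$ of (\ref{eq:defK})—whose membership in $\ker(1-E)$ is the content of the proposition on $\chi_S^{*}\xi_c^{(+)}$—and the case-dependent complement $\mathcal{A},\dots,\mathcal{D}$ coming from the persistent $\pm1$ eigenvectors of the Dirichlet random walk $T$ together with the symmetric/anti-symmetric flows. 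Here the bipartite/non-bipartite and self-loop dichotomies (Cases~A--D) control whether $\pm1$ actually occur and how the flow-type eigenvectors interact with $\mathcal{K}$; establishing that $\mathcal{K}$ and the flow subspaces are mutually orthogonal and jointly exhaust $\ker(1\mp E)$ is the \emph{main obstacle}, and I would carry it out using the explicit generators $\xi_c^{(+)}$ and the lift $(1-\lambda S)d_1^{*}f$ of (\ref{eq:RW}), deferring the detailed count to the flow analysis of Sections~6--7.
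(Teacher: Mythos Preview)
Your reduction to the peripheral spectrum of the contraction $E=\chi_S U_G\chi_S^{*}$ is exactly the paper's route, and your direct argument that peripheral eigenvectors satisfy $E^{*}v=\bar\lambda v$ cleanly reproduces what the paper imports from \cite{HS} via Proposition~\ref{prop:center}; the conclusion $\gamma=\|\Pi_{\mathcal{L}}\chi_S\phi_0\|^2$ with $\mathcal{L}=\bigoplus_{|\lambda|=1}\ker(\lambda-E)$ matches Proposition~\ref{prop:convergence}.

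Where your outline drifts is in the final decomposition. In the paper $\mathcal{K}$ is \emph{all} of $\ker(1-E)$ (Proposition~\ref{prop:h+}) and the case-dependent space $\mathcal{A},\mathcal{B},\mathcal{C}$ or $\mathcal{D}$ is \emph{all} of $\ker(-1-E)$ (Proposition~\ref{prop:h-}); they are not intertwined, and their mutual orthogonality is immediate from your own argument since they sit at distinct peripheral eigenvalues. So the ``main obstacle'' you name---orthogonality of $\mathcal{K}$ and the flow subspaces---is in fact free; the genuine combinatorial work is the dimension count of Lemma~\ref{lem:dimension}, which shows that the explicitly constructed generators $\xi_c^{(\pm)}$ and $\eta_{x,y}$ already span the respective eigenspaces. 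Note also that $\mathcal{A},\dots,\mathcal{D}$ are built entirely from the anti-symmetric flows $\xi_c^{(-)}$ and the walk functions $\eta_{x,y}$, not from lifted random-walk eigenvectors; by Lemma~\ref{lem:HS} the lifts $(1-\lambda S)d_1^{*}f$ populate only $\mathcal{T}$ (the $\lambda\neq\pm1$ block), so your description of where $\mathcal{A},\dots,\mathcal{D}$ come from should be corrected before you attempt the construction.
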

\begin{proof}
Part 1 of Theorem~\ref{thm:main} is obtained by consequences of Proposition~\ref{prop:stationary} and Part 2 derives from Propositions~\ref{prop:h+}, \ref{prop:h-}.  
\end{proof}
From this Theorem, we obtain useful sufficient conditions for non-zero survival probability as follows. 
\begin{corollary}
Assume $G_0$ is a finite and connected graph. 
If $G_0$ is not a tree or $G_0$ has more than $2$ self-loops, then $\gamma>0$. 
\end{corollary}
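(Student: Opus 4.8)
The plan is to deduce the corollary from Theorem~\ref{thm:main} by exhibiting, under either hypothesis, a non-zero subspace among those whose projections make up $\gamma$. I read the statement $\gamma>0$ as: there is an admissible initial state (one with $\supp(\phi_0)\subset A_0$) whose survival probability is positive. Since such a $\phi_0$ is otherwise unconstrained, $\chi_S\phi_0$ exhausts all of $\mathbb{C}^{A_0}$, so the formula in Theorem~\ref{thm:main}, being a sum of terms $\|\Pi_{\bullet}\chi_S\phi_0\|^2\ge 0$, yields a positive $\gamma$ for a suitable $\phi_0$ precisely when at least one of $\mathcal{T},\mathcal{K},\mathcal{A},\dots,\mathcal{D}$ is non-trivial. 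In fact it is enough to produce a single eigenvector $\psi\in\mathbb{C}^{A_0}$ of $E=\chi_S U_G\chi_S^{*}$ with $|\lambda|=1$ and no leakage into the sinks: then the sink dynamics gives $\phi_n=\lambda^{n}\psi$, hence $\gamma=\|\psi\|^{2}>0$.

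For the hypothesis that $G_0$ is not a tree I would use $\mathcal{K}$. Connectivity together with $b_1=|E_0|-|V_0|+1\ge 1$ makes the set of fundamental cycles $\Gamma$ non-empty, and the generators $\chi_S\xi_c^{(+)}$, $c\in\Gamma$, are linearly independent because their supports are the distinct fundamental cycles; hence $\dimm\,\mathcal{K}=b_1\ge 1$ by (\ref{eq:defK}). By the Proposition preceding (\ref{eq:defK}), $\chi_S^{*}\xi_c^{(+)}\in\Ker(1-U_G)$, and since $\supp(\xi_c^{(+)})\subset A_0$ these fixed states put no amplitude on the sinks. Taking $\phi_0\propto\xi_c^{(+)}$ gives $\phi_n=\phi_0$ for all $n$ and therefore $\gamma=1>0$.

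For the hypothesis on self-loops the relevant subspace is the eigenvalue $-1$ part of the attractor, i.e. Case C or D. First I would show, by solving $U_G\psi=-\psi$, that when $A_{0,\sigma}\ne\emptyset$ the surviving eigenvectors of eigenvalue $-1$ are exactly the \emph{symmetric Kirchhoff flows}: $S\psi=\psi$ and $d_1\psi=0$. The point is that $U_G\psi=-\psi$ forces $\psi(a)-\psi(\overline a)$ to depend only on $o(a)$ and to change sign along every arc, so the presence of one self-loop (which makes $G_0$ non-bipartite) forces this quantity to vanish identically; the remaining conditions are precisely $S\psi=\psi$ and $d_1\psi=0$. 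Conversely any such $\psi$ satisfies $E\psi=-\psi$, and a direct computation at the boundary (using $d_1\psi=0$) shows it sends no amplitude to the sinks, so it survives. Now $\ker(I-S)$ has dimension $|E_0|+|A_{0,\sigma}|$, and $d_1$ restricted to it is, up to the positive diagonal rescaling by $\tilde d(v)^{-1/2}$, the unsigned incidence map of $G_0$, whose rank is $|V_0|$ because a self-loop renders $G_0$ non-bipartite. Hence
\[ \dimm\bigl(\ker(I-S)\cap\ker d_1\bigr)=|E_0|+|A_{0,\sigma}|-|V_0|\ \ge\ |A_{0,\sigma}|-1, \]
using $|E_0|\ge|V_0|-1$. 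With more than two self-loops this is at least $2>0$, so the Case C/D subspace is non-zero and $\gamma>0$.

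The non-tree case is essentially immediate from the cited Proposition once $\Gamma\neq\emptyset$, so I expect the genuine work to lie in the self-loop case: correctly identifying the eigenvalue $-1$ eigenspace with the symmetric-flow kernel $\ker(I-S)\cap\ker d_1$, and computing the rank of the incidence map in the presence of self-loops and a non-trivial boundary $\delta V_0$ (where $\tilde d(v)$ exceeds the internal degree, which must be tracked carefully to confirm the flows are genuinely non-leaking). This is precisely the content that the flow description of Section~7 and the explicit definitions (A)--(D) are designed to supply, and I would lean on those to make the dimension count rigorous rather than redo it by hand.
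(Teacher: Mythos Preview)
Your proposal is correct and is exactly the argument the paper leaves implicit: the corollary is stated without proof as an immediate consequence of Theorem~\ref{thm:main}, and your derivation---using $\dim\mathcal{K}=b_1\ge 1$ from Proposition~\ref{prop:h+} in the non-tree case, and the formula $\dim\ker(1+E)=|E_0|-|V_0|+|A_{0,\sigma}|\ge |A_{0,\sigma}|-1$ from Lemma~\ref{lem:dimension} in the self-loop case---is precisely how it unpacks. Your identification of $\ker(1+E)$ with $\ker(1-S)\cap\ker d_1$ and the incidence-rank computation are the content of Lemma~\ref{lem:HS}(1) and the proof of Lemma~\ref{lem:dimension}, so you may simply cite those rather than redo them.
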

\begin{remark}
The eigenspaces $\mathcal{A},\mathcal{B},\mathcal{C},\mathcal{D}$ correspond to the p-attractors defined in \cite{MNJ}.
\end{remark}

%%%%%%%%%%%%%%%
%%%%%%%%%%%%%%%
\section{Example}
\label{sec:ex}
Let us consider a simple example in Fig.~\ref{Fig:2}. 
$G_0=(V_0,A_0)$ with $V_0=\{u_1,u_2,u_3,u_4\}$ and $A_0=\{a_1,a_2,a_3,a_4,\overline{a}_1,\overline{a}_2,\overline{a}_3,\overline{a}_4,b_1,b_2\}$ with $u_1=t(a_4)=o(a_1)$, $u_2=t(a_1)=o(a_2)$, $u_3=t(a_2)=o(a_3)$, $u_4=t(a_3)=o(a_4)$ and $o(b_1)=t(b_1)=u_1$, $o(b_3)=t(b_3)=u_3$. 
This graph fits into Case C. 
So let $q$ be the closed walk by $q=(a_1,a_2,a_3,a_4)$ and 
$q'$ be the walk between two selfloops by $(b_1,a_1,a_2,b_2)$. 
Then 
\begin{align*}
    \xi_q^{(+)} &= (\delta_{a_1}+\delta_{a_2}+\delta_{a_3}+\delta_{a_4})
    -(\delta_{\overline{a}_1}+\delta_{\overline{a}_2}+\delta_{\overline{a}_3}+\delta_{\overline{a}_4},) \\
    \xi_q^{(-)} &=(\delta_{a_1}+\delta_{\overline{a}_1})-(\delta_{a_2}+\delta_{\overline{a}_2})+(\delta_{a_3}+\delta_{\overline{a}_3})-(\delta_{a_4}
    +\delta_{\overline{a}_4}), \\
    \eta_{b_1-b_2} &= \delta_{b_1}-(\delta_{a_1}+\delta_{\overline{a}_1})+(\delta_{a_2}+\delta_{\overline{a}_2})-\delta_{b_2}.
\end{align*}
The matrix representation of the self adjoint operator $T$ is expressed by 
    \[ T=\frac{1}{3}\begin{bmatrix}
    1 & 1 & 0 & 1 \\
    1 & 0 & 1 & 0 \\
    0 & 1 & 1 & 1 \\
    1 & 0 & 1 & 0
    \end{bmatrix}. \]
The eigenvector of $T$ which has no overlaps to $\delta V_0= \{2,4\}$ is  easily obtained by 
    \[ f=[1,\;0,\;-1,\;0]^\top \]
which satisfies $Tf=(1/3)f$. 
Here the symbol ``$\top$" is the transpose. 
The eigenfunctions lifted up to $\mathbb{C}^A$ from $f$ is 
    \[ (\varphi_\pm )(a)= f(t(a)) -\lambda_\pm f(o(a)) \]
by (\ref{eq:RW}), where 
$$
\lambda_\pm= \frac{1}{3}(1  \pm i \sqrt{8}) = e^{\pm i \theta}, \quad \theta = \arccos\frac{1}{3} .
$$
Then we have  
    \begin{align*} 
    \varphi_\pm(a_1)=-\lambda_\pm,\;\varphi_\pm(a_2)=-1,\;\varphi_\pm(a_3)=\lambda_\pm,\;\varphi_\pm(a_4)=1,\\
    \varphi_\pm(\bar{a}_1)=1,\;\varphi_\pm(\bar{a}_2)=\lambda_\pm,\;\varphi_\pm(\bar{a}_3)=-1,\;\varphi_\pm(\bar{a}_4)=-\lambda_\pm,\\
    \varphi_\pm(b_1)=1-\lambda_\pm,\;\varphi_\pm(b_2)=-1+\lambda_\pm.
    \end{align*}
It holds that $E\varphi_\pm=\lambda_\pm  \varphi_\pm$.  
%In general, $\langle \varphi_+,\varphi_- \rangle_{\mathbb{C}^{A_0}}=0$ holds. 
We obtain 
    \begin{align*}
    \mathcal{T} &= \mathbb{C}\varphi_{+}\oplus \mathbb{C}\varphi_{-}, \\
    \mathcal{K} &= \mathbb{C}\xi_{(a_1,a_2,a_3,a_4)}^{(+)} , \\
    \mathcal{C} &= 
    \mathbb{C}\xi_{(a_1,a_2,a_3,a_4)}^{(-)}\oplus\mathbb{C}\eta_{b_1-b_2} . 
    \end{align*}
After the Gram Schmidt procedure to $\mathcal{C}$, we have 
    \[\mathcal{C}=\mathbb{C}\xi_{(a_1,a_2,a_3,a_4)}^{(-)}\oplus \mathbb{C}(\eta_{b_1-b_2}+\eta_{b_1-b_2}') . \]
Here we have denoted 
$$
\eta_{b_1-b_2}'=\delta_{b_1}-(\delta_{a_4}+\delta_{\overline{a}_4})+(\delta_{a_3}+\delta_{\overline{a}_3})-\delta_{b_2},
$$
see Fig.~\ref{Fig:3}; we express the functions $\varphi_\pm$, $\xi^{(+)}_{(a_1,a_2,a_3,a_4)}$, $\eta_{b_1-b_2}$, $\eta_{b_1-b_2}'$, $\eta_{b_1-b_2}+\eta_{b_1-b_2}'$ by weighted sub-digraphs of $G_0$.  
Then the time evolution of the asymptotic dynamics of this quantum walk is described by 
\begin{multline} 
U^n\sim 
\frac{1}{8}| \xi_{(a_1,a_2,a_3,a_4)}^{(+)}\rangle  \langle \xi_{(a_1,a_2,a_3,a_4)}^{(+)}|\\
+(-1)^n\left( \frac{1}{8}|\xi_{(a_1,a_2,a_3,a_4)}^{(-)}\rangle\langle \xi_{(a_1,a_2,a_3,a_4)}^{(-)}|+\frac{1}{16}| \eta_{b_1-b_2}+\eta'_{b_1-b_2} \rangle\langle \eta_{b_1-b_2}+\eta'_{b_1-b_2}|  \right) \\
+e^{in \theta} \frac{3}{32}|\varphi_+\rangle\langle \varphi_+|
+e^{-in \theta} \frac{3}{32}|\varphi_-\rangle\langle \varphi_-| .
\end{multline}
Finally, for example, if the initial state is $\varphi_0=\delta_{b_1}$, then, the survival probability can be computed by 
    \begin{align*} 
    \gamma &=|| \Pi_{\mathcal{T}}\varphi_0 ||^2+|| \Pi_{\mathcal{K}}\varphi_0 ||^2+|| \Pi_{\mathcal{C}}\varphi_0 ||^2  \\
    &= \frac{1}{{16}}|\langle \eta_{b_1-b_2}+\eta_{b_1-b_2}', \varphi_0\rangle|^2
    +{\frac{3}{32}|\langle \varphi_+,\varphi_0\rangle|^2}+{\frac{3}{32}|\langle \varphi_-,\varphi_0\rangle|^2} \\
    &= \frac{1}{16}|2|^2+\frac{3}{32}|1-\lambda_+|^2+\frac{3}{32}|1-\lambda_-|^2 \\
    &= 1/2. 
    \end{align*}
The second equality derives from the fact that the orthonormalized eigenvectors in the centered generalized eigenspace which have an overlap with the self-loop $b_1$ are given by ${(1/4)}(\eta_{b_1-b_2}+\eta_{b_1-b_2}')$ and {$\sqrt{3/32}\;\varphi_\pm$}. 
\begin{figure}[htbp]
    \centering
    \includegraphics[width=10.0cm]{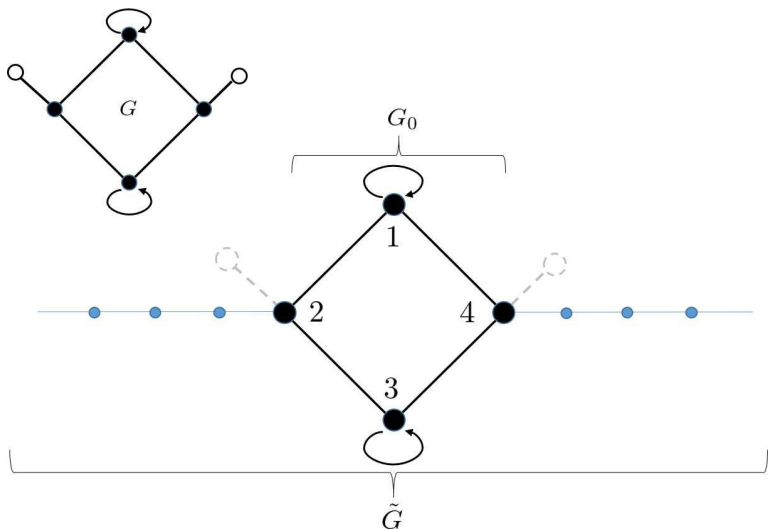}
    \caption{{\bf The setting of graphs:} The original graph $G$ is depicted at the left corner. The sinks $V_s$ are the white vertices. The subgraph $G_0$ of $G$ is the black colored graph at the center. The set of boundary vertices $\delta V$ is $\{2,4\}$. The semi-infinite graph $\tilde{G}$ is constructed by connecting the infinite length path to each boundary vertex of $G_0$. }
    \label{Fig:2}
\end{figure}
\begin{figure}[htbp]
    \centering
    \includegraphics[width=12.0cm]{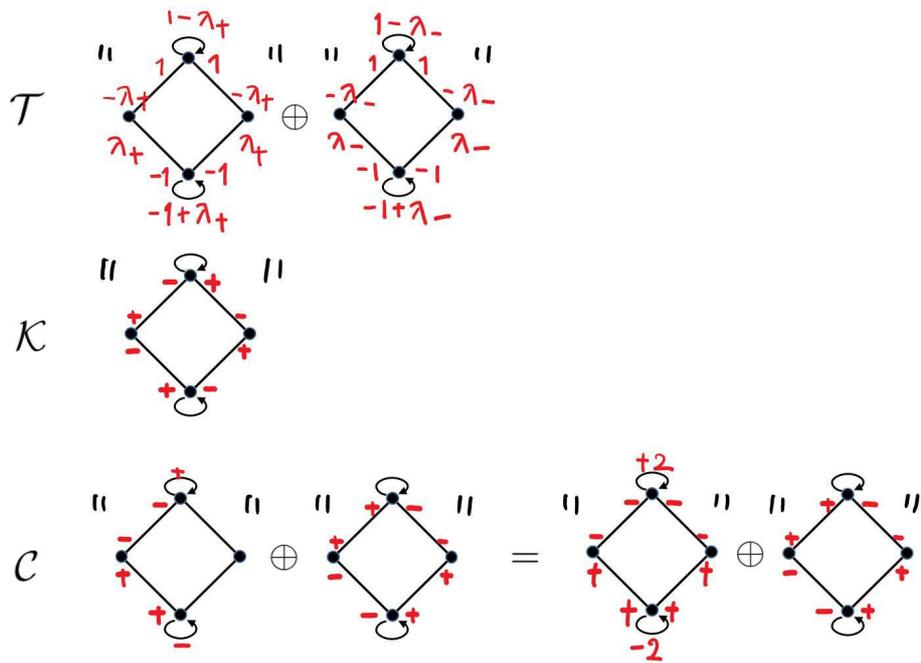}
    \caption{{\bf The centered eigenspace of the example:} The centered eigenspace to which Grover walk with sinks asymptotically belongs in this example is $\mathcal{T}\oplus \mathcal{K}\oplus \mathcal{C}$. Each weighted sub-digraph represents a function in $\mathbb{C}^{A_0}$; the complex value at each arc is the returned value of the function. Each eigenspace; $\mathcal{T}$, $\mathcal{K}$ and $\mathcal{C}$, is spanned by the functions represented by these weighted sub-digraphs. }
    \label{Fig:3}
\end{figure}
%%%%%%%%%%%%%%%%%
%%%%%%%%%%%%%%%%%
\section{Relation between Grover walk with sinks and Grover walk with tails}

%%%%
\subsection{Grover walk on graphs with tails}
Let $G=(V,A)$ be a finite and connected graph with the set of sinks $V_s\subset V$. 
%Let us define the boundary vertices connecting to sinks such %that 
%\[ \delta V=\{o(a) \;|\; o(a)\in V\setminus V_s,\; t(a)\in %V_s\}. \] 
We introduce the infinite graph $\tilde{G}=(\tilde{V},\tilde{A})$ by adding the semi-infinite paths to each vertex of $\delta V=\{v_1,\dots,v_r\}$, that is, 
\begin{align*}
  \tilde{V} &= V(G)\setminus V_s \cup (\cup_{j=1}^rV(\mathbb{P}_j)), \\
  \tilde{A} &= \cup_{j=1}^rA(\mathbb{P}_j) \;\cup\;\left(A \setminus \{a\in A \;|\; t(a)\in V_s \;or\;o(a)\in V_s\}  \right).  
\end{align*} 
Here $\mathbb{P}_{i}$'s is the semi-infinite paths named the tail whose origin vertex is identified with $v_i$ ($i=1,\dots,r$). 
See Fig.~\ref{Fig:2}. 
Recall that $G_0=(V_0,A_0)$ is the subgraph of $G$ eliminating the sinks $V_s$. 
Recall also that $\chi_S: \mathbb{C}^{A}\to \mathbb{C}^{A_0}$ is  
    \[ (\chi_S\phi)(a)=\phi(a) \]
for all $a\in A_0$. 
In the same way, we newly introduce $\chi_T: \mathbb{C}^{\tilde{A}}\to \mathbb{C}^{A_0}$ by 
    \[ (\chi_T\phi)(a)=\phi(a) \]
for all $a\in A_0$. The adjoint $\chi_T^*: \mathbb{C}^{A_0}\to \mathbb{C}^{\tilde{A}}$ is 
    \[ (\chi_T^*f)(a)=\begin{cases} f(a) & \text{: $a\in A_0$,}\\ 0 & \text{: otherwise.} \end{cases} \]
%%%%
The following theorem was proven in \cite{HS}.
%%%
\begin{theorem}[\cite{HS}]\label{thm:HS}
Let $\tilde{G}=(\tilde{V},\tilde{A})$ be the graph with infinite tails $\{\mathbb{P}_j\}_{j=1}^r$ induced by $G_0$ and its boundaries $\delta V_0$. 
Assume the initial state $\psi_0$ is 
    \[  \psi_0(a)=
    \begin{cases} 
    \alpha_1 & \text{: $a\in A(\mathbb{P}_1)$, $\dist(o(a),v_1)>\dist(t(a),v_1)$,} \\
    $\vdots$ \\
    \alpha_r & \text{: $a\in A(\mathbb{P}_r)$, $\dist(o(a),v_r)>\dist(t(a),v_r)$,} \\
    0 & \text{: otherwise.}
    \end{cases} \]
Then $\lim_{n\to\infty}\psi_n(a)=:\psi_\infty(a)$ exists and $\psi_\infty(a)$ is expressed by
    \[ \psi_\infty(a) = \frac{\alpha_1+\cdots+\alpha_r}{r}+\mathrm{j}(a). \]
Here $\mathrm{j}(\cdot)$ is the electric current flow on the electric circuit assigned the resistance value $1$ at each edge, that is, $\mathrm{j}(\cdot)$ satisfies the following properties:
    \begin{align*}
        d_1\;\mathrm{j} &= 0,\; \mathrm{j}(\overline{a})=-\mathrm{j}(a)\;\;(\mathrm{Kirchhoff's\; current\; law})\\
        \partial_2^*\;\mathrm{j} &= 0 \;\;(\mathrm{Kirchhoff's\; voltage\; law})
    \end{align*}
with the boundary conditions 
    \begin{equation}\label{eq:currentboundary} \mathrm{j}(e_i) = \alpha_i-\frac{\alpha_1+\cdots+\alpha_r}{r} \end{equation}
for any $e_i$ ($i=1,\dots,r$) such that $t(e_i)=v_j$ and $o(e_i)\in V(\mathbb{P}_i)$. 
\end{theorem}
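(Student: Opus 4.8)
The plan is to produce the limit by hand, as a stationary state of the Grover evolution on $\tilde{G}$, and then argue that the transient generated by the incoming wave radiates out along the tails. Write $U_{\tilde{G}}$ for the Grover time evolution on $\tilde{G}$ and note that at a degree-two vertex the Grover rule reduces to the free shift, so along each tail the inward amplitudes simply march toward the boundary while the outward amplitudes escape to infinity. This suggests that in the limit the inward arcs of $\mathbb{P}_i$ still carry $\alpha_i$, while the outward arcs carry a reflected constant. Accordingly I would define the candidate $\psi_\infty:=c\,\mathbf{1}+\mathrm{j}$ on all of $\tilde{A}$, where $c=(\alpha_1+\cdots+\alpha_r)/r$, $\mathbf{1}$ is the all-ones function, and $\mathrm{j}$ is the antisymmetric flow prescribed by the boundary currents $\mathrm{j}(e_i)=\alpha_i-c$. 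Because $\sum_{i=1}^{r}(\alpha_i-c)=0$ the boundary sources are balanced, so a current-conserving flow with these boundary values exists; among all such divergence-free flows the electric-network flow is the unique one that in addition kills every cycle, i.e.\ satisfies $\partial_2^*\mathrm{j}=0$.

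The second step is to verify $U_{\tilde{G}}\psi_\infty=\psi_\infty$ by direct substitution, treating the two summands separately. For the constant part, $(U_{\tilde{G}}(c\,\mathbf{1}))(a)=-c+\frac{2}{\deg(o(a))}\sum_{t(b)=o(a)}c=-c+2c=c$, so $\mathbf{1}$ is a genuine $+1$ eigenvector. For the flow part, antisymmetry gives $-\mathrm{j}(\overline{a})=\mathrm{j}(a)$, while the sum $\sum_{t(b)=o(a)}\mathrm{j}(b)$, taken over all arcs of $\tilde{G}$ entering $o(a)$, vanishes by current conservation --- this is $d_1\mathrm{j}=0$ at interior vertices, and at each boundary vertex $v_i$ it is the balance in which the inward tail arc $e_i$ supplies $\mathrm{j}(e_i)=\alpha_i-c$. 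Hence $(U_{\tilde{G}}\mathrm{j})(a)=\mathrm{j}(a)$ and $U_{\tilde{G}}\psi_\infty=\psi_\infty$. This reproduces the claimed tail values: the inward arc of $\mathbb{P}_i$ carries $c+(\alpha_i-c)=\alpha_i$ and the outward arc carries $2c-\alpha_i$.

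The subtle point is that current conservation alone does not single out $\mathrm{j}$: every divergence-free flow is fixed by $U_{\tilde{G}}$, consistent with the earlier Proposition that each cycle flow $\chi_S^*\xi_c^{(+)}$ lies in $\ker(1-U_G)$. What selects the genuine limit is orthogonality to the centered eigenspace $\mathcal{K}=\spann\{\chi_S\xi_c^{(+)}\}$. A short computation using $S=S^*$ and antisymmetry gives $\langle \xi_c^{(+)},\mathrm{j}\rangle=2\,\overline{(\partial_2^*\mathrm{j})(c)}$, so the requirement $\mathrm{j}\perp\mathcal{K}$ is precisely Kirchhoff's voltage law $\partial_2^*\mathrm{j}=0$. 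The initial inflow state carries no component in $\mathcal{K}$, and this absence is preserved by the evolution, which is why the limit must be the cycle-free electric-network flow rather than an arbitrary conserved flow.

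The main obstacle is the convergence itself, i.e.\ showing $\psi_n(a)\to\psi_\infty(a)$ for each fixed arc. Setting $\rho_n=\psi_n-\psi_\infty$ we have $\rho_{n+1}=U_{\tilde{G}}\rho_n$, and $\rho_0$ is concentrated near the boundary together with the outgoing constant that radiates away; unlike the finite-graph situation there is no norm conservation on a fixed finite set, so convergence cannot follow from a contraction. I would establish it through the spectral/scattering structure of $U_{\tilde{G}}$: the tails contribute absolutely continuous spectrum, and on the orthogonal complement of the bound states (the centered eigenspace) a RAGE-type argument forces the transient to leave every fixed finite region, leaving only the stationary state $\psi_\infty$ in the limit. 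Making this escape-to-infinity rigorous --- and in particular proving that the surviving component is orthogonal to $\mathcal{K}$, so that the selected flow obeys the voltage law --- is the delicate part; this is the content imported from \cite{HS}, and the decomposition $\mathcal{T}=\bigoplus_{|\lambda|=1,\lambda\neq\pm1}\ker(\lambda-E)$ anticipated in the text is the spectral counterpart of this scattering picture.
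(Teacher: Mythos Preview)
The paper does not contain its own proof of this theorem: it is stated as a citation from \cite{HS} (``The following theorem was proven in \cite{HS}''), so there is no in-paper argument to compare against. Your sketch --- build the candidate $\psi_\infty=c\mathbf{1}+\mathrm{j}$, check $U_{\tilde G}\psi_\infty=\psi_\infty$, isolate the electric flow among all divergence-free flows via $\partial_2^*\mathrm{j}=0\Leftrightarrow\mathrm{j}\perp\mathcal{K}$, and then appeal to a scattering/RAGE argument for convergence --- is a reasonable outline of how such a result is typically established, and you correctly identify that the convergence step is the nontrivial content that must be imported from \cite{HS}.

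One caution: your final paragraph effectively concedes that the crux (pointwise convergence on each arc, and the mechanism forcing the surviving component to be orthogonal to $\mathcal{K}$) is not proved here but deferred to \cite{HS}. That is honest, but it means your proposal is not a self-contained proof; it is a verification that the claimed limit is \emph{a} stationary state together with a heuristic for why it is \emph{the} limit. The actual argument in \cite{HS} proceeds through the recurrence $\chi_T\psi_n=E\chi_T\psi_{n-1}+\rho$ with constant inflow $\rho$, shows $\rho\in\mathcal{H}_s$, and uses the contraction of $E$ on $\mathcal{H}_s$ (Proposition~\ref{prop:center}) to get convergence of the geometric series $\sum_{k\geq 0}E^k\rho$ inside $G_0$, from which the tail values follow. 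This is closer to a direct dynamical-systems fixed-point argument than to the scattering-theoretic picture you invoke, and it bypasses any need for a RAGE-type theorem.
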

\begin{remark}
The stationary state $\psi_\infty$ satisfies the equation 
    \[ \psi_\infty(a)=(U_{\tilde{G}}\psi_\infty)(a) \]
for any $a\in A$, and $\psi_\infty \in \ell^\infty$, however $||\psi_\infty||_{\tilde{A}}=\infty$. 
\end{remark}
\begin{remark}
The function $\xi_c^{(+)}=(1-S)\partial_2\delta_c^{(\Gamma)}$  also satisfies \[ \chi_T^{*}\xi_c^{(+)}(a)=(U_{\tilde{G}}\chi_T^{*}\xi_c^{(+)})(a) \] and Kirchhoff's current and voltage laws if the internal graph $G_0$ is not a tree, while it does not satisfy the boundary condition~(\ref{eq:currentboundary}) because the support of this function $\chi_T^{*}\xi_c^{(+)}$ has no overlaps to the tails but is included in the fundamental cycle $c$ in the internal graph $G_0$. 
\end{remark}
%%%
\subsection{Relation between Grover walk with sinks and Grover walk with tails}
Let us consider the Grover walk on $G$ with sinks $V_s$ and with the initial state $\psi_0^{(S)}\in \mathbb{C}^{A}$. 
We describe $U_G$ as the time evolution operator of Grover walk on $G$. 
The $n$-th iteration of this walk following (\ref{eq:TEsink}) is denoted by $\psi_n^{(S)}$. 
Let us also consider the Grover walk on $\tilde{G}$ with the tails and with the ``same" initial state  
\[\psi_0^{(T)}(a)= \begin{cases} \psi_0^{(S)}(a) & \text{: $a\in A_0$,}\\ 0 & \text{: otherwise.}\end{cases}\] 
Note that the initial state $\psi_0^{(S)}$ is different from the one in the setting of Theorem~\ref{thm:HS}.  
Putting the time evolution operator on $\tilde{G}$ by $U_{\tilde{G}}$, we denote the $n$-th iteration of this walk by $\psi_n^{(T)}=U_{\tilde{G}}\psi_{n-1}^{(T)}$. 
Then we obtain a simple but important relation between QW with sinks and QW with tails. 
\begin{lemma}
Let the setting of the QW with sinks and QW with tails be as the above.
Then for any time step $n$, we have 
\[ \chi_S\psi_n^{(S)}=\chi_T\psi_n^{(T)}.  \]
\end{lemma}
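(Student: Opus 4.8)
The plan is to argue by induction on $n$, comparing the two local update rules on the internal arc set $A_0$. At the level of a single arc both $U_G$ and $U_{\tilde G}$ have the identical algebraic form
\[
(U_G\psi)(a)=-\psi(\overline a)+\frac{2}{\deg(o(a))}\sum_{t(b)=o(a)}\psi(b),
\]
so for $a\in A_0$ the only places where the sink evolution and the tail evolution can disagree are (i) the coin normalisation $\deg(o(a))$ at the vertex $v:=o(a)$, and (ii) the contribution to $\sum_{t(b)=o(a)}\psi(b)$ coming from arcs $b$ incident to $v$ but lying outside $A_0$ --- namely arcs emanating from a sink in the sink picture, and arcs pointing inward along a tail in the tail picture. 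The base case $n=0$ is immediate, since $\psi_0^{(T)}$ is defined to coincide with $\psi_0^{(S)}$ on $A_0$ and to vanish elsewhere.

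The first thing I would establish is that these two ``exterior'' contributions are harmless. For the sink walk I claim the invariant $\supp(\psi_n^{(S)})\subseteq A_0$ for every $n$: the initial state is supported in $A_0$, and the prescription (\ref{eq:TEsink}) kills any amplitude that $U_G$ pushes onto an arc incident to a sink, so no amplitude ever escapes the internal graph. For the tail walk I claim that at every step $\psi_n^{(T)}$ vanishes on all tail arcs pointing \emph{towards} the internal graph. The reason is that each tail vertex has degree $2$, and at a degree-$2$ vertex $w$ with neighbours $x,y$ the Grover rule reduces to pure transmission,
\[
(U_{\tilde G}\psi)(w\to x)=\psi(y\to w),
\]
with no reflected term; since the tails start empty, the inward-moving amplitude on a tail can only be fed, one step at a time, by inward-moving amplitude further out, which is zero. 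I would make this precise by a short auxiliary induction running outward along each tail, using the transmission identity at each degree-$2$ vertex.

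With both invariants in hand the inductive step is a direct comparison. Fix $a\in A_0$ and set $v=o(a)$. Since $a\in A_0$ we have $\overline a\in A_0$; by the sink invariant the sum $\sum_{t(b)=v}\psi_{n-1}^{(S)}(b)$ runs effectively only over $b\in A_0$, and by the tail invariant the sum $\sum_{t(b)=v}\psi_{n-1}^{(T)}(b)$ likewise runs effectively only over $b\in A_0$, the inward tail arcs into $v$ contributing nothing. Moreover, by the construction of $\tilde G$ each arc from a boundary vertex to a sink in $G$ is replaced by an arc to a tail, so $\deg_{\tilde G}(v)=\deg_G(v)=\tilde d(v)$ for every $v\in V_0$ and the coin normalisations agree. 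Substituting the inductive hypothesis $\psi_{n-1}^{(S)}(c)=\psi_{n-1}^{(T)}(c)$ for all $c\in A_0$ into the two update formulas makes them term-by-term identical, whence $\psi_n^{(S)}(a)=\psi_n^{(T)}(a)$ for all $a\in A_0$, i.e. $\chi_S\psi_n^{(S)}=\chi_T\psi_n^{(T)}$.

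The routine parts are the algebra of the inductive step and the degree bookkeeping at the boundary. The genuine content, and the step I expect to be the main obstacle, is the tail invariant: one must verify that amplitude leaving the internal graph along a tail never comes back, which is exactly the no-back-reflection property of the Grover coin at the degree-$2$ vertices of a semi-infinite path. This is the precise sense in which, viewed from the internal graph, an absorbing sink and an outgoing tail are indistinguishable at every finite time.
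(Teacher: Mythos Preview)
Your proof is correct and follows the same route as the paper: both isolate the two invariants (the sink state stays supported in $A_0$; no inward-moving amplitude ever appears on the tails) and conclude that the restricted states obey the common recurrence $\phi_n=E\phi_{n-1}$ with $E=\chi_S U_G\chi_S^{*}=\chi_T U_{\tilde G}\chi_T^{*}$. The paper packages these invariants as the operator identities $(1-\chi_S^{*}\chi_S)\psi_n^{(S)}=0$ and $\chi_T U_{\tilde G}(1-\chi_T^{*}\chi_T)\psi_n^{(T)}=0$, while you compute arc by arc; your justification of the tail invariant via the degree-$2$ transmission property is in fact more explicit than what the paper writes.
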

\begin{proof}
The initial state of $\chi_S\psi_0^{(S)}$ coincides with $\chi_T\psi_0^{(T)}$ because of the setting.  
Note that $\chi_J^*\chi_J$ is the projection operator onto $\mathbb{C}^{A_0}$ while $\chi_J\chi_J^*$ is the identity operator on  $\mathbb{C}^{A_0}$ $(J\in\{S,T\})$.  
Since $\psi_n^{(S)}(a)=0$ for any $a\in V_s$, we have 
    \[ (1-\chi_S^*\chi_S)\psi_n^{(S)}=0 \]
for any $n\in \mathbb{N}$. 
Then putting $\chi_S\psi_n^{(S)}=:\phi_n^{(S)}$ and $\chi_S U_G \chi_S^*=: E$ we have 
\begin{align*}
    \phi_{n}^{(S)} &=\chi_S\psi_n^{(S)} = \chi_S U_G\psi_{n-1}^{(S)} \\ 
       & = \chi_S U_G (\chi_S^*\chi_S + (1-\chi_S^*\chi_S)) \psi_{n-1}^{(S)} \\
        &= E \phi_{n-1}^{(S)} + (\chi_S U_G (1-\chi_S^*\chi_S))\psi_{n-1}^{(S)} \\
        &= E\phi_{n-1}^{(S)}.
\end{align*}
It is easy to see that $E=\chi_S U_G \chi_S^*=\chi_T U_{\tilde{G}} \chi_T^*$. 
Since the support of the initial state is included in the internal graph, the inflow never come into the internal graph from the tail for any time $n$, which implies 
\[(\chi_T U_{\tilde{G}} (1-\chi_T^*\chi_T))\psi_{n}^{(T)}=0. \]
It holds that $E=\chi_S \tilde{U}\chi_S^*=\chi_T U_{\tilde{G}}\chi_T^*$. 
Then putting $\phi_n^{(T)}:=\chi_T\psi_n^{(T)}$, 
in the same way as $\psi_n^{(S)}$, we have 
\begin{align*}
    \phi_{n}^{(T)}
       & = \chi_T U_{\tilde{G}} (\chi_T^*\chi_T + (1-\chi_T^*\chi_T)) \psi_{n-1}^{(T)} \\
       &= E\phi_{n-1}^{(T)}. 
\end{align*}
Therefore $\chi_S\psi_n^{(S)}$ and $\chi_T\psi_n^{(T)}$ follow the same recurrence and have the same initial state which means $\chi_S\psi_n^{(S)}=\chi_T\psi_n^{(T)}$ for any $n\in \mathbb{N}$. 
\end{proof}
\begin{corollary}
Let the initial state for the Grover walk with sinks
be $\phi_0$ with $\supp(\phi_0)\subset A_0$. 
The survival probability $\gamma$ can be expressed by 
    \[ \gamma= ||\phi_0||_{A}^2-\sum_{n=0}^\infty \tau_n, \]
where $\tau_n$ is the outflow of the QW with tails from the internal graph $G_0$, i.e., 
    \[ \tau_n= \sum_{o(a)\in\delta V, \; t(a)\notin A_0} |(U_{\tilde{G}}\chi_T^{*}\phi_{n-1}^{(T)})(a)|^2 \]
\end{corollary}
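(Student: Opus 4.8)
The plan is to reinterpret $\gamma$ as the total $\ell^2$-mass that the walk with tails loses into the tails, and then to recover this loss as a telescoping sum of the per-step outflows $\tau_n$. First I would record the two facts that build the bridge. By the preceding Lemma the common projected state $\phi_n:=\chi_S\psi_n^{(S)}=\chi_T\psi_n^{(T)}$ obeys the recursion $\phi_n=E\phi_{n-1}$ with $E=\chi_T U_{\tilde{G}}\chi_T^*$, and since $\supp(\psi_n^{(S)})\subset A_0$ (equivalently $(1-\chi_S^*\chi_S)\psi_n^{(S)}=0$) the survival probability at time $n$ is exactly the internal mass,
\[ \gamma_n=\sum_{a\in A}|\psi_n^{(S)}(a)|^2=\|\phi_n\|_{A_0}^2. \]
By Part 1 of Theorem~\ref{thm:main} the limit $\gamma=\lim_{n\to\infty}\|\phi_n\|^2$ exists.

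Next I would establish the one-step balance $\|\phi_{n-1}\|^2=\|\phi_n\|^2+\tau_n$. Writing $\tilde{\psi}:=\chi_T^*\phi_{n-1}$ for the extension-by-zero of $\phi_{n-1}$ to $\mathbb{C}^{\tilde{A}}$, the map $\chi_T^*$ preserves the norm and $U_{\tilde{G}}$ is unitary on $\ell^2(\tilde{A})$, so $\|U_{\tilde{G}}\tilde{\psi}\|_{\tilde{A}}^2=\|\tilde{\psi}\|_{\tilde{A}}^2=\|\phi_{n-1}\|^2$. Splitting the arc set as $\tilde{A}=A_0\sqcup(\tilde{A}\setminus A_0)$ gives
\[ \|\phi_{n-1}\|^2=\sum_{a\in A_0}|(U_{\tilde{G}}\tilde{\psi})(a)|^2+\sum_{a\in \tilde{A}\setminus A_0}|(U_{\tilde{G}}\tilde{\psi})(a)|^2, \]
and the first sum is $\|\chi_T U_{\tilde{G}}\chi_T^*\phi_{n-1}\|^2=\|E\phi_{n-1}\|^2=\|\phi_n\|^2$.

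The crux, which I expect to be the main obstacle, is to identify the second sum with $\tau_n$, i.e. to show that all the escaping mass is carried by the outgoing boundary arcs. Since $\supp(\tilde{\psi})\subset A_0$, the defining formula for $U_{\tilde{G}}$ shows that $(U_{\tilde{G}}\tilde{\psi})(a)$ can be nonzero only when either $\bar{a}\in A_0$ or $o(a)$ is the terminus of some arc in $A_0$. For an arc $a\in\tilde{A}\setminus A_0$ (a tail arc) the first alternative never occurs, and the second forces $o(a)\in\delta V$ with $t(a)$ the first tail vertex; the deeper tail arcs and the inward boundary arcs $e_i$ (whose origin is a tail vertex) all receive zero amplitude. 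Hence
\[ \sum_{a\in\tilde{A}\setminus A_0}|(U_{\tilde{G}}\tilde{\psi})(a)|^2=\sum_{o(a)\in\delta V,\ t(a)\notin A_0}|(U_{\tilde{G}}\chi_T^*\phi_{n-1}^{(T)})(a)|^2=\tau_n, \]
so combining with the previous paragraph yields $\tau_n=\|\phi_{n-1}\|^2-\|\phi_n\|^2$.

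Finally I would telescope. Summing over $n$ gives $\sum_{n=1}^{N}\tau_n=\|\phi_0\|^2-\|\phi_N\|^2$, and letting $N\to\infty$ while using the existence of $\gamma=\lim_N\|\phi_N\|^2$ shows that the series converges with $\sum_n\tau_n=\|\phi_0\|^2-\gamma$. Since $\supp(\phi_0)\subset A_0$ we have $\|\phi_0\|^2=\|\phi_0\|_A^2$, and therefore $\gamma=\|\phi_0\|_A^2-\sum_n\tau_n$, as claimed.
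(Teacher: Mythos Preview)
Your argument is correct. The paper states this corollary without proof, treating it as immediate from the preceding Lemma together with the unitarity of $U_{\tilde{G}}$; you have supplied precisely the natural details one would expect---the one-step mass balance $\|\phi_{n-1}\|^2=\|\phi_n\|^2+\tau_n$ via unitarity and locality of the shift, followed by telescoping. The identification of the escaping mass with the outgoing boundary arcs is handled carefully, and your observation that deeper tail arcs and the inward boundary arcs receive zero amplitude (because their origin is a tail vertex, hence not the terminus of any arc in $A_0$) is exactly the point that makes the bookkeeping work.

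One cosmetic remark: the paper writes the sum as $\sum_{n=0}^\infty\tau_n$, but by the stated definition $\tau_0$ would involve $\phi_{-1}^{(T)}$; either this is a typo for $n\geq 1$ or one takes $\phi_{-1}^{(T)}=0$ so that $\tau_0=0$. Your telescoping from $n=1$ is the cleaner choice and matches the intended content.
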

\begin{remark}
The time evolution for $\phi_n^{(T)}$ is given by 
    \[ \phi_{n}^{(T)}=E\phi_{n-1}^{(T)}+\rho, \]
where $\rho= \chi_T U_{\tilde{G}}\psi_0^{(T)}$. In this case, the inflow is $\rho=0$. On the other hand, in the setting of Theorem~\ref{thm:HS}, $\rho$ is given by a nonzero constant vector.
\end{remark}
Let us now consider a QW with tails with a general initial state $\Psi_0\in \mathbb{C}^{\tilde{A}}$ on $\tilde{G}$. We denote $\nu=\chi_T\Psi_0$ and 
$\rho=\chi_T U_{\tilde{G}}(1-\chi^*\chi)\Psi_0$. 
We summarize the relation between a QW with sinks and a QW for the setting of Theorem~\ref{thm:HS} in the table from the view point of a QW with tails. 
\begin{center}
\begin{tabular}{l|c|c|c}
     &  $\rho$ & $\nu$ & state in $G_0$\\ \hline
QW with tails in the setting of Thm~\ref{thm:HS}\cite{HS}     & $\neq 0$ & $=0$ & $\in \mathcal{H}_s$ (for any $n$)\\
QW with sinks & $=0$ & $\neq 0$ & $\in \mathcal{H}_c$ (asymptotically)
\end{tabular}
\end{center}
%%%
\section{Centered generalized eigenspace of $E$ for the Grover walk case}
\subsection{The stationary states from the view point of the centered generalized eigenspace}
From the above discussion, we see the importance of the spectral decomposition 
$$
E=\chi_S U_G\chi_S^*=\chi_T U_{\tilde{G}}\chi_T^* ,
$$
to obtain both limit behaviors. 
The operator $E$ is no longer a unitary operator and, moreover, it is not ensured that it is diagonalizable. 
The centered generalized eigenspace of $E$ is defined by 
    \[ \mathcal{H}_c:= \{\psi\in \mathbb{C}^{A_0} \;|\; \exists\; m\geq 1 \mathrm{\;and\;} \exists\;|\lambda|=1 \mathrm{\;such\;that\;}(E^m-\lambda) \psi=0 \} \]
Let $\mathcal{H}_s$ be defined by 
    \[ \mathbb{C}^{A_0}=\mathcal{H}_c \oplus \mathcal{H}_s. \]
Here ``$\oplus$" means $\mathcal{H}_c$ and $\mathcal{H}_s$ are complementary spaces, that is, if $u_c+u_v=0$ for some $u_c\in \mathcal{H}_c$ and $u_v\in \mathcal{H}_s$, then $u_c$ and $u_v$ must be $u_c=u_v=0$. 
Note that since $E$ is not a normal operator on a vector space $\mathcal{H}_c\oplus \mathcal{H}_s$, it seems that in general $\langle u_c,u_v\rangle\neq 0$ for $u\in \mathcal{H}_c$ and $\mathcal{H}_s\in N$. 
However, we can see some important properties of the spectrum of $E$ in the following proposition.
%%%
\begin{proposition}[\cite{HS}]\label{prop:center}
\noindent
\begin{enumerate}
\item For any $\lambda\in \mathrm{Spec}(E)$, it holds that $|\lambda|\leq 1$, i.e., 
\[ \mathcal{H}_s= \{ \psi\;|\; \exists\;m\in\mathbb{N},\;\exists\;|\lambda|<1,\;(U-\lambda)^m \psi)=0 \}. \]
\item Let $P_c$ be the projection operator on $\mathcal{H}_c$ along with $\mathcal{H}_s$; that is, $P_cE=EP_c$ and $P_c^2=P_c$. Then $P_c$ is the orthogonal projection onto $\mathcal{H}_c$, i.e., $P_c=P_c^*$.  
\item The operator $E$ acts as a unitary operator on $\mathcal{H}_c$, that is, 
$\mathcal{H}_c=\oplus_{|\lambda|=1}\ker(\lambda-E)$ and $U_G\chi_S^*\varphi=\lambda \chi_S^*\varphi$ for any $\varphi\in \ker (\lambda-E)$ with $|\lambda|=1$.
\end{enumerate}
\end{proposition}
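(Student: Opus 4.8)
The plan is to exploit that $E=\chi_S U_G\chi_S^*$ is the compression of the unitary operator $U_G$ to the coordinate subspace $\mathbb{C}^{A_0}$, so that $E$ is a contraction and parts (1)--(3) all become instances of the standard behaviour of a finite-dimensional contraction at its unimodular eigenvalues (the ``unitary part'' of the canonical decomposition of a contraction). First I would record the elementary identities $\chi_S\chi_S^*=I$ on $\mathbb{C}^{A_0}$ and $\chi_S^*\chi_S=P$, the orthogonal projection in $\mathbb{C}^A$ onto the functions supported on $A_0$, together with $\|\chi_S w\|\le\|w\|$ (with equality iff $\supp(w)\subset A_0$) and $\|\chi_S^* f\|=\|f\|$. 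Since $U_G$ is unitary, for every $f\in\mathbb{C}^{A_0}$ one gets $\|Ef\|=\|\chi_S U_G\chi_S^* f\|\le\|U_G\chi_S^* f\|=\|\chi_S^* f\|=\|f\|$, so $\|E\|\le 1$ and the spectral radius of $E$ is at most $1$; this is part (1), and $\mathcal{H}_c,\mathcal{H}_s$ are then the sums of generalized eigenspaces with $|\lambda|=1$ and $|\lambda|<1$ respectively, forming an $E$-invariant direct-sum decomposition of $\mathbb{C}^{A_0}$.

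Next I would rule out Jordan blocks on the unit circle, which gives the first half of part (3). If $(E-\lambda)^2 g=0$ but $f:=(E-\lambda)g\neq 0$ with $|\lambda|=1$, then $E^n g=\lambda^n g+n\lambda^{n-1}f$, hence $\|E^n g\|\ge n\|f\|-\|g\|\to\infty$, contradicting $\|E^n\|\le 1$. Thus $E$ is semisimple on $\mathcal{H}_c$ and $\mathcal{H}_c=\bigoplus_{|\lambda|=1}\ker(\lambda-E)$. The core of the argument is then a norm-saturation step: for a genuine eigenvector $Ef=\lambda f$ with $|\lambda|=1$, the chain $\|f\|=\|Ef\|\le\|U_G\chi_S^* f\|=\|f\|$ must be an equality, and since $\chi_S$ loses no norm only on vectors supported in $A_0$, this forces $\supp(U_G\chi_S^* f)\subset A_0$. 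Applying $\chi_S^*\chi_S$ then gives $U_G\chi_S^* f=\chi_S^* E f=\lambda\chi_S^* f$, so $\chi_S^* f$ is a genuine eigenvector of $U_G$, which completes part (3); multiplying by $U_G^*$ and using $|\lambda|=1$ also yields $E^* f=\bar\lambda f$, so every unit-circle eigenvector of $E$ is simultaneously an eigenvector of $E^*$.

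Finally, part (2) --- the orthogonality $\mathcal{H}_c\perp\mathcal{H}_s$, equivalently $P_c=P_c^*$ --- follows from the relation $E^* f=\bar\lambda f$. For $f\in\ker(\lambda-E)\subset\mathcal{H}_c$ and any $h\in\mathcal{H}_s$ I would compute $\langle f,Eh\rangle=\langle E^* f,h\rangle=\lambda\langle f,h\rangle$; running an induction along a Jordan chain for the eigenvalue $\mu$ of $h$ (with $|\mu|<1$) gives $\langle f,Eh\rangle=\mu\langle f,h\rangle$ up to terms already shown orthogonal to $f$, so $(\lambda-\mu)\langle f,h\rangle=0$ and hence $\langle f,h\rangle=0$ because $|\lambda|=1\neq|\mu|$. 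Therefore $\mathcal{H}_c\perp\mathcal{H}_s$ and $P_c$ is the orthogonal projection, as claimed.

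I expect the main obstacle to be the norm-saturation step: it is the one place where the specific structure of $E$ (rather than abstract contraction theory) is genuinely used, and one must argue carefully that equality in $\|\chi_S w\|\le\|w\|$ pins the support of $w=U_G\chi_S^* f$ inside $A_0$, thereby lifting an $E$-eigenvector to an honest $U_G$-eigenvector. Once this is in hand, the no-Jordan-block blow-up argument and the adjoint-based orthogonality computation are routine.
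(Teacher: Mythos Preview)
Your argument is correct. The paper does not supply its own proof of this proposition: it is quoted from \cite{HS} and stated without proof, so there is nothing in the paper to compare against line by line. What you have written is a clean, self-contained derivation of all three parts from the single observation that $E=\chi_S U_G\chi_S^*$ is a contraction (compression of a unitary), together with the standard ``no Jordan blocks on the unit circle'' growth estimate and the norm-saturation step. The latter is exactly the point where the specific structure of $\chi_S$ is used: equality in $\|\chi_S w\|\le\|w\|$ forces $\supp(w)\subset A_0$, hence $U_G\chi_S^*f=\chi_S^*Ef=\lambda\chi_S^*f$, and from this both the genuine-$U_G$-eigenvector assertion and the relation $E^*f=\bar\lambda f$ follow. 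The orthogonality $\mathcal{H}_c\perp\mathcal{H}_s$ via the Jordan-chain induction is routine once $E^*f=\bar\lambda f$ is in hand. In short, your proposal fills in what the paper outsources to \cite{HS}, and the logical structure you describe (contraction $\Rightarrow$ semisimplicity on the unit circle $\Rightarrow$ lift to $U_G$-eigenvectors $\Rightarrow$ orthogonality of the decomposition) is exactly how such results are proved in the contraction-operator literature.
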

We call $\mathcal{H}_c$ and $\mathcal{H}_s$ the {\it centered eingenspace} and the {\it stable eigenspace}~\cite{R}, respectively. 
\begin{corollary}\label{cor:orthogonal}
For any $\psi\in \mathcal{H}_s$ and $\phi\in \mathcal{H}_c$, it holds that 
$\langle \psi,\phi \rangle=0$. 
\end{corollary}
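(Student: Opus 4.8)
The plan is to deduce the orthogonality directly from the self-adjointness of the projection $P_c$ established in Proposition~\ref{prop:center}, with no further spectral analysis required. The substantive work—showing that the algebraically-defined projection onto $\mathcal{H}_c$ along $\mathcal{H}_s$ is in fact self-adjoint—has already been carried out there; the corollary is then the standard fact that an orthogonal projection has mutually orthogonal range and kernel.

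First I would record the two defining properties of $P_c$ as the projection onto $\mathcal{H}_c$ along $\mathcal{H}_s$. For any $\phi\in\mathcal{H}_c$ we have $P_c\phi=\phi$, since $\mathcal{H}_c$ is exactly the range of $P_c$; and for any $\psi\in\mathcal{H}_s$ we have $P_c\psi=0$, since $\mathcal{H}_s$ is exactly the kernel along which $P_c$ projects. Both follow from $P_c^2=P_c$ together with the complementary decomposition $\mathbb{C}^{A_0}=\mathcal{H}_c\oplus\mathcal{H}_s$. Then, invoking $P_c=P_c^*$ from Part~2 of Proposition~\ref{prop:center}, the computation is immediate:
\[ \langle \psi,\phi\rangle = \langle \psi, P_c\phi\rangle = \langle P_c^*\psi,\phi\rangle = \langle P_c\psi,\phi\rangle = 0. \]
The first equality inserts $P_c\phi=\phi$, the second transfers $P_c$ to the other slot by definition of the adjoint, the third rewrites $P_c^*=P_c$ by self-adjointness, and the last uses $P_c\psi=0$.

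There is no genuine obstacle at this stage; the only subtlety worth flagging is that the orthogonality is \emph{not} automatic from the complementary splitting alone. Indeed, the discussion preceding the corollary explicitly warns that $\langle u_c,u_v\rangle$ need not vanish for a generic complementary decomposition of the invariant subspaces of a non-normal operator such as $E$. What rescues the argument is precisely the nontrivial assertion of Proposition~\ref{prop:center} that $P_c$ is self-adjoint, so the entire content of the corollary is inherited from that proposition and the proof reduces to the one-line identity above.
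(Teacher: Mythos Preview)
Your proof is correct and matches the paper's intended approach: the corollary is stated immediately after Proposition~\ref{prop:center} without a separate proof, precisely because the self-adjointness $P_c=P_c^*$ in part~(2) makes the orthogonality of range and kernel a one-line consequence, exactly as you wrote.
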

Now let us see the stationary states from the view point of the {\it orthogonal} decomposition of $\mathcal{H}_c\oplus \mathcal{H}_s$. 
%%%%
\begin{proposition}\label{prop:stationary}
\noindent
\begin{enumerate}
\item The state $\chi_T\psi_n$ in Theorem~\ref{thm:HS} belongs to $\mathcal{H}_s$ for any time step $n\in \mathbb{N}$. 
\item The state of QW with sinks; $\chi_S\phi_n$, asymptotically belongs to $\mathcal{H}_c$ in the long time limit $n$.  
\end{enumerate}
\end{proposition}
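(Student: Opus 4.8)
The plan is to reduce both statements to the recurrences already derived together with the structural facts about $E$ recorded in Proposition~\ref{prop:center}. Throughout I would write $P_c$ for the projection onto $\mathcal{H}_c$, which is \emph{orthogonal} by Proposition~\ref{prop:center}(2), so that $\mathbb{C}^{A_0}=\mathcal{H}_c\oplus\mathcal{H}_s$ is an orthogonal decomposition (Corollary~\ref{cor:orthogonal}); both summands are $E$-invariant since $P_c$ commutes with $E$. The decisive input is Proposition~\ref{prop:center}(3): a centered eigenvector $\varphi\in\ker(\lambda-E)$ with $|\lambda|=1$ lifts to a genuine eigenvector $\chi_T^*\varphi$ of the unitary $U_{\tilde{G}}$ with eigenvalue $\lambda$, whose support lies in $A_0$.

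For Part 1, I would start from the identity $\chi_T\psi_n=E\,\chi_T\psi_{n-1}+\rho_n$ with inflow term $\rho_n=\chi_T U_{\tilde{G}}(1-\chi_T^*\chi_T)\psi_{n-1}$, and from $\chi_T\psi_0=0$, which holds because the initial state of Theorem~\ref{thm:HS} is supported on the tails. Since $\mathcal{H}_s$ is $E$-invariant and contains $0$, an induction reduces the claim to showing that every inflow satisfies $\rho_n\in\mathcal{H}_s$, equivalently $\rho_n\perp\mathcal{H}_c$. Fixing $\varphi\in\ker(\lambda-E)$ with $|\lambda|=1$ and moving $U_{\tilde{G}}$ across the pairing by unitarity,
\[ \langle \varphi,\rho_n\rangle_{A_0}=\langle \chi_T^*\varphi,\,U_{\tilde{G}}(1-\chi_T^*\chi_T)\psi_{n-1}\rangle_{\tilde{A}}=\langle U_{\tilde{G}}^*\chi_T^*\varphi,\,(1-\chi_T^*\chi_T)\psi_{n-1}\rangle_{\tilde{A}}=0, \]
the last equality because $U_{\tilde{G}}^*\chi_T^*\varphi=\bar{\lambda}\,\chi_T^*\varphi$ is still supported in $A_0$, whereas $(1-\chi_T^*\chi_T)\psi_{n-1}$ is supported on the tails. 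As $\mathcal{H}_c=\oplus_{|\lambda|=1}\ker(\lambda-E)$, this gives $\rho_n\perp\mathcal{H}_c$ for all $n$, completing the induction.

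For Part 2, I would use the clean recurrence $\chi_S\phi_n=E\,\chi_S\phi_{n-1}$ from the preceding Lemma, so that $\chi_S\phi_n=E^n\chi_S\phi_0$. Writing $\chi_S\phi_0=u_c+u_s$ with $u_c=P_c\chi_S\phi_0\in\mathcal{H}_c$ and $u_s\in\mathcal{H}_s$, the $E$-invariance of the two summands gives $\chi_S\phi_n=E^n u_c+E^n u_s$ with $E^n u_c\in\mathcal{H}_c$ and $E^n u_s\in\mathcal{H}_s$. Hence $(1-P_c)\chi_S\phi_n=E^n u_s$, and I would bound its norm. By Proposition~\ref{prop:center}(1) the restriction $E|_{\mathcal{H}_s}$ has spectral radius strictly below $1$, so its powers tend to $0$ in operator norm and $E^n u_s\to 0$. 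Therefore $\dist(\chi_S\phi_n,\mathcal{H}_c)=\|(1-P_c)\chi_S\phi_n\|\to 0$, which is precisely the assertion that $\chi_S\phi_n$ asymptotically belongs to $\mathcal{H}_c$.

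The hard part is the orthogonality $\rho_n\perp\mathcal{H}_c$ in Part 1: it rests entirely on the fact that centered eigenvectors of $E$ lift to honest eigenvectors of the unitary $U_{\tilde{G}}$ supported inside $G_0$, which is exactly what lets unitarity convert the inflow pairing into a pairing against a tail-supported vector and annihilate it. A secondary point needing care is the decay step in Part 2: because $E|_{\mathcal{H}_s}$ need not be normal or even diagonalizable, the conclusion $E^n u_s\to 0$ must be drawn from the spectral radius being below $1$ (via the Jordan form, where geometric decay dominates the polynomial growth from nontrivial Jordan blocks) rather than from any eigenvector-wise estimate.
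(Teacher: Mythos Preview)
Your proof is correct and follows essentially the same line as the paper's: the recurrence $\chi_T\psi_n=E\chi_T\psi_{n-1}+\rho$ together with $\rho\perp\mathcal{H}_c$ for Part~1, and the decay of $E^n$ on $\mathcal{H}_s$ for Part~2. The only difference is that where the paper invokes Lemma~3.5 of \cite{HS} for the orthogonality of the inflow to $\mathcal{H}_c$, you supply a direct self-contained argument via unitarity and the fact that centered eigenvectors lift to honest eigenvectors of $U_{\tilde{G}}$ supported in $A_0$; note that Proposition~\ref{prop:center}(3) is stated for $U_G$ and $\chi_S$, but the same norm argument carries over to $U_{\tilde{G}}$ and $\chi_T$ since $E=\chi_T U_{\tilde{G}}\chi_T^*$ as well.
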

%%%
\begin{proof}
The inflow $\rho=\chi^*U\psi_0$ is orthogonal to $\mathcal{H}_c$ by a direct consequence of Lemma~3.5 in \cite{HS}, which implies $E^n\rho\in \mathcal{H}_s$ for any $n\in \mathbb{N}$ by Proposition~\ref{prop:center}.  
Since the stationary state of part 1 is described by the limit of the following recurrence  
$$
\chi_T\psi_{n}=E\chi_T\psi_{n-1}+\rho, \quad  \chi_T\psi_0=0 ,
$$
we obtain the conclusion of part 1. 
On the other hand, let us consider the proof of part 2 in the following. The time evolution in $G_0$ obeys $\chi_S\phi_n=E\chi_{S}\phi_{n-1}$. 
The overlap of $\chi_S\phi_n$ to the space $\mathcal{H}_s$ decreases faster than polynomial times  because all the absolute value of the generalized eigenvalues of $\mathcal{H}_s$ are strictly less than $1$. (See Proposition~\ref{prop:convergence} for more detailed order of the convergence.) Then only the contribution of the centered eigenspace, whose eigenvalues lie on the unit circle in the complex plain, remains in the long time limit. 
\end{proof}
%%%
Let $W=P_cE=EP_c=P_cEP_c$ be the operator restricted to the centered eigenspace $\mathcal{H}_c$. Then we have 
    \[ \lim_{n\to\infty}\left| \chi_S\phi_n(a)-W^n\chi_S\phi_0(a)\right|=0 \]
for any $a\in A_0$ uniformly by Proposition~\ref{prop:stationary}. 
This means that in the long time limit, the time evolution is reduced to $W$ which is a unitary operator on $\mathcal{H}_c$.  
\begin{proposition}\label{prop:convergence}
The survival probability is re-expressed by 
    \[ \gamma=  ||P_c\chi_S\phi_0||^2. \]
The convergence speed\footnote{$f(n)=O(g(n))$ means $\lim_{n\to\infty} |f(n)/g(n)|<\infty$ if the limit exists. } is estimated by $O(n^\kappa r_{max}^n)$, where $\kappa=\dim \mathcal{H}_s$, $r_{max}=\max\{|\lambda| \;;\; \lambda\in \mathrm{Spec}(E),\;|\lambda|<1 \}$.
\end{proposition}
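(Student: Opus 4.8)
The plan is to reduce the entire statement to the large-$n$ behaviour of the single vector $E^n\chi_S\phi_0$, using the spectral structure of $E$ already established in Proposition~\ref{prop:center} and Proposition~\ref{prop:stationary}. First I would observe that the sink dynamics \eqref{eq:TEsink} forces $\supp(\phi_n)\subset A_0$ for every $n$, so that $\gamma_n=\sum_{a\in A}|\phi_n(a)|^2=\|\chi_S\phi_n\|^2$; combined with the recurrence $\chi_S\phi_n=E\chi_S\phi_{n-1}$ proved in the previous section, this gives $\gamma_n=\|E^n\chi_S\phi_0\|^2$. Hence the whole proposition is a statement about the powers $E^n$ acting on $\chi_S\phi_0$.

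Next I would split $\chi_S\phi_0=P_c\chi_S\phi_0+(I-P_c)\chi_S\phi_0$ along $\mathbb{C}^{A_0}=\mathcal{H}_c\oplus\mathcal{H}_s$. Because $P_cE=EP_c$ and $W=P_cE$ acts as a unitary on $\mathcal{H}_c$ by Proposition~\ref{prop:center}(3), I obtain $E^n\chi_S\phi_0=W^nP_c\chi_S\phi_0+E^n(I-P_c)\chi_S\phi_0$, with the first summand in $\mathcal{H}_c$ and the second in $\mathcal{H}_s$, both subspaces being $E$-invariant. The crucial input is that $\mathcal{H}_c\perp\mathcal{H}_s$ (Corollary~\ref{cor:orthogonal}, equivalently $P_c=P_c^*$ from Proposition~\ref{prop:center}(2)), so the Pythagorean theorem applies and yields the exact identity $\gamma_n=\|W^nP_c\chi_S\phi_0\|^2+\|E^n(I-P_c)\chi_S\phi_0\|^2=\|P_c\chi_S\phi_0\|^2+\|E^n(I-P_c)\chi_S\phi_0\|^2$, where unitarity of $W$ on $\mathcal{H}_c$ makes the first term independent of $n$. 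Since every generalized eigenvalue of $E$ on $\mathcal{H}_s$ has modulus strictly below $1$ by Proposition~\ref{prop:center}(1), the second term vanishes as $n\to\infty$, giving $\gamma=\|P_c\chi_S\phi_0\|^2$.

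For the convergence speed I would estimate the error vector $\|E^n(I-P_c)\chi_S\phi_0\|=\|\chi_S\phi_n-W^n\chi_S\phi_0\|$, which is exactly the quantity shown to vanish just before the proposition. Writing $E_s:=E|_{\mathcal{H}_s}$ and passing to its Jordan normal form, each Jordan block $\lambda I+N$ of size $m$ satisfies $\|(\lambda I+N)^n\|=\|\sum_{k=0}^{m-1}\binom{n}{k}\lambda^{n-k}N^k\|=O(n^{m-1}|\lambda|^n)$. Taking the maximum over all blocks and using that no block exceeds $\dim\mathcal{H}_s=\kappa$ in size while $|\lambda|\le r_{max}$ for every eigenvalue on $\mathcal{H}_s$, I conclude $\|E_s^n\|=O(n^{\kappa}r_{max}^n)$, which bounds the error and produces the advertised rate.

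The main obstacle is conceptual bookkeeping rather than computation. I must use that the orthogonality $\mathcal{H}_c\perp\mathcal{H}_s$ granted by Proposition~\ref{prop:center}(2)—nontrivial since $E$ is not normal—is precisely what upgrades mere convergence into the exact Pythagorean splitting above. I must also be careful that the rate $O(n^\kappa r_{max}^n)$ refers to the amplitude error $\chi_S\phi_n-W^n\chi_S\phi_0$ and not to $\gamma_n-\gamma$; the latter, being the square of that error, decays instead like $O(n^{2\kappa}r_{max}^{2n})$. Pinning the polynomial degree to the maximal Jordan block size and confirming it is dominated by $\dim\mathcal{H}_s$ is the only place where a little care is needed.
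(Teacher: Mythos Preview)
Your argument is correct and follows essentially the same route as the paper: split $\chi_S\phi_0$ along $\mathcal{H}_c\oplus\mathcal{H}_s$, use the orthogonality from Proposition~\ref{prop:center}(2) to get a Pythagorean decomposition of $\gamma_n$, invoke unitarity of $W$ on $\mathcal{H}_c$ for the constant term, and bound the stable part via Jordan blocks. Your remark that the stated rate $O(n^\kappa r_{max}^n)$ bounds the amplitude error $\chi_S\phi_n-W^n\chi_S\phi_0$ rather than $\gamma_n-\gamma$ (which decays quadratically faster) is a sharpening the paper does not make explicit.
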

\begin{proof}
Putting $E(1-P_c)=W'$, we have 
$$
W + W' = E, \quad W W' = 0,
$$ 
by Proposition~\ref{prop:center} (2).
Note that the operator $E^n$ is similar to 
    \[ \bigoplus_{\lambda\in \mathrm{Spec}(E)}J^n(\lambda;k_\lambda) \]
with some natural numbers $k_\lambda$'s. Here $J(\lambda;k)$ is the $k$-dimensional matrix by 
\[ J(\lambda;k)=\begin{bmatrix} 
\lambda & 1       &        &        & \\
        & \lambda & 1      &        & \\
        &         & \ddots & \ddots & \\
        &         &        & \ddots & 1 \\
        &         &        &        &\lambda
\end{bmatrix}.\]
We obtain that the survival probability at each time $n$ is described by    
    \begin{align*}
        \gamma_n 
        &= || U_G\chi_S^* E^{n-1} \chi_S\phi_0 ||^2 \\
        & = || U_G\chi_S^* (W^{n-1}+{W'}^{n-1}) \chi_S\phi_0 ||^2 \\
        &= || (W^{n-1}+{W'}^{n-1}) \chi_S\phi_0 ||^2 \\
        &= || W^{n-1} \chi_S\phi_0 ||^2  + || {W'}^{n-1} \chi_S\phi_0 ||^2. 
    \end{align*}
In the third equality we have used the fact that $U_G$ is unitary, the last equality follows from Corollary~\ref{cor:orthogonal}.
The second term decreases to zero by  Proposition~\ref{prop:center} (2) with the convergence speed at least $O(n^{\kappa}r_{max}^n )$ because the Jordan matrix $J(\lambda;k)$ can be estimated by  $J(\lambda;k)^n=O(n^k|\lambda|^n)$.
Hence, we find for $\gamma_n$    
    \begin{align*}
     \gamma_n   &= || W^{n-1}\chi_S\phi_0  ||^2 + O(n^\kappa r_{max}^n) \;(n>>1)\\
        &= || W^{n-1}P_c\chi_S\phi_0  ||^2 + O(n^\kappa r_{max}^n) \\
        &= || P_c\chi_S\phi_0  ||^2 + O(n^\kappa r_{max}^n), 
    \end{align*}
where in the second equality we have used that $W = W P_c$ and the last equality follows from Proposition~\ref{prop:center} (3).
\end{proof}

Therefore, the characterization of $\mathcal{H}_c$ is important to obtain the asymptotic behavior of $\phi_n$.  
%%%
\subsection{Characterization of centered generalized eigenspace by graph notations}
The centered generalized eigenspace of $E$ can be rewritten by using the boundary operator $d_1$ and the self-adjoint operator $T=d_1Sd^*_1$ as follows. 
%%%%%
\begin{lemma}[\cite{HS}]\label{lem:HS}
Assume $\lambda\in \mathrm{Spec}(E)$ with $|\lambda|=1$. Then we have
\begin{enumerate}
    \item $\lambda=\pm 1$ if and only if  $\ker(\lambda-E)=\ker(-\lambda-S)\cap \ker d_1$;
    \item $\lambda \neq \pm 1$ if and only if  $\supp (g)\subset V_0\setminus \delta V_0$ for any $g\in \ker((\lambda+\lambda^{-1})/2-T)\neq 0$. 
\end{enumerate}
\end{lemma}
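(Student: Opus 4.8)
The plan is to reduce the spectral analysis of the non-unitary operator $E$ to that of the self-adjoint discriminant $T=d_1 S d_1^*$ through the factorization $E=S(2d_1^*d_1-I)$, which one checks directly from the definition of $U_G$ and of $S,d_1$ restricted to $A_0$. Two structural facts drive everything. First, the diagonal operator $I-d_1d_1^*$ is positive semidefinite, its $v$-th entry $1-|\{a\in A_0:t(a)=v\}|/\tilde d(v)$ vanishing for interior $v$ and being strictly positive for $v\in\delta V_0$. Second, $T$ is similar to the Dirichlet-cut substochastic matrix $P'$, whose spectral radius is strictly less than $1$ by Perron--Frobenius because $G_0$ is connected and $\delta V_0\neq\emptyset$; hence $\mathrm{Spec}(T)\subset(-1,1)$, and in particular $\pm1\notin\mathrm{Spec}(T)$.

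First I would record the two eigen-identities coming from $E\psi=\lambda\psi$. Applying $S$ gives $2d_1^*d_1\psi=(I+\lambda S)\psi$; writing $g:=d_1\psi$ and applying $d_1$ (using $T=d_1Sd_1^*$ together with $2Sd_1^*g=(S+\lambda)\psi$) yields
\[ 2\,d_1d_1^*\,g=(1-\lambda^2)\,g+2\lambda\,Tg. \]
Next comes the crucial boundary-localization step, where I expect the real work to lie. Since $|\lambda|=1$, Proposition~\ref{prop:center}(3) guarantees that $\psi$ is a genuine eigenvector on which $E$ preserves the norm; as $S$ is unitary this forces $\|(2d_1^*d_1-I)\psi\|=\|\psi\|$, and expanding the square collapses to $\langle g,(I-d_1d_1^*)g\rangle=0$. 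Positivity of $I-d_1d_1^*$ then gives $(I-d_1d_1^*)g=0$, i.e. $\supp(g)\subset V_0\setminus\delta V_0$. On the interior $d_1d_1^*$ acts as the identity, so the displayed identity simplifies to $Tg=\tfrac{\lambda+\lambda^{-1}}{2}g$.

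With these in hand both parts follow. For part (1), $\lambda=\pm1$ makes the target eigenvalue $\mu=(\lambda+\lambda^{-1})/2$ equal to $\pm1\notin\mathrm{Spec}(T)$, whence $g=d_1\psi=0$; feeding this back into $2d_1^*d_1\psi=(I+\lambda S)\psi$ gives $(I+\lambda S)\psi=0$, i.e. $\psi\in\Ker(-\lambda-S)\cap\Ker d_1$. The reverse inclusion is immediate, since $S\psi=-\lambda\psi$ and $d_1\psi=0$ give $E\psi=-S\psi=\lambda\psi$; the biconditional is completed by noting that for $\lambda\neq\pm1$ one has $\Ker(-\lambda-S)=0$ while $\Ker(\lambda-E)\neq0$, so the claimed equality can hold only at $\lambda=\pm1$. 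For part (2), $\lambda\neq\pm1$ forces $g\neq0$ (otherwise $(I+\lambda S)\psi=0$ would put $-\lambda^{-1}\in\mathrm{Spec}(S)=\{\pm1\}$), and the localization step already shows every such $g$ lies in $\Ker(\mu-T)$ with support off $\delta V_0$; conversely, given a nonzero $g\in\Ker(\mu-T)$ with interior support and $\lambda$ chosen so that $\mu=(\lambda+\lambda^{-1})/2$, I would take the lift $\psi=(1-\lambda S)d_1^*g$ of \eqref{eq:RW} and verify by direct substitution that $d_1\psi=\tfrac{1-\lambda^2}{2}g$ and hence $E\psi=\lambda\psi$ with $\psi\neq0$, closing the correspondence. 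The main obstacle throughout is the boundary: establishing $\mathrm{Spec}(T)\subset(-1,1)$ and, above all, the norm-preservation argument localizing unit-modulus eigenvectors away from $\delta V_0$; once the non-normality of $E$ is neutralized on the unit circle via Proposition~\ref{prop:center}(3), the remaining algebra is routine.
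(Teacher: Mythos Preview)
The paper does not prove this lemma at all; it is quoted verbatim from the reference~\cite{HS} (Higuchi--Segawa). So there is no in-paper proof to compare against, and your self-contained argument is a genuine addition.

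Your proof is correct in substance. A couple of remarks. First, invoking Proposition~\ref{prop:center}(3) for the norm-preservation step is unnecessary: once $E\psi=\lambda\psi$ with $|\lambda|=1$ you have $\|E\psi\|=\|\psi\|$ tautologically, and the rest of the boundary-localization argument (expanding $\|(2d_1^*d_1-I)\psi\|^2=\|\psi\|^2$ to get $\langle g,(I-d_1d_1^*)g\rangle=0$, then using positivity of $I-d_1d_1^*$) goes through directly. Dropping that reference also removes any worry about logical circularity with the results of~\cite{HS}. Second, your treatment of part~(2) establishes the bijection between $\ker(\lambda-E)$ and the interior-supported part of $\ker(\mu-T)$, via $\psi=\tfrac{2}{1-\lambda^{2}}(1-\lambda S)d_1^*g$ and its inverse $g=d_1\psi$; this is exactly what is needed for the identification $\mathcal{T}=\bigoplus_{|\lambda|=1,\lambda\neq\pm1}\ker(\lambda-E)$ announced after~\eqref{eq:RW}. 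The literal phrasing of the lemma (``for any $g\in\ker(\mu-T)$'') would additionally assert that \emph{every} $T$-eigenvector at such a $\mu$ is interior-supported, which you do not show and which is not obvious from your argument alone---but this stronger reading is not used anywhere in the present paper, and the bijection you prove suffices.
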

%%%%%%
In the following, we consider the characterization of $\ker(\pm 1 -E)$ using some walks on graph $G_0$ up to the situations of the graph; cases (A)--(D). 
First we prepare the following notations.
For each support edge $e\in E_0$, there are two arcs $a$ and $\overline{a}$ such that $|a|=|\overline{a}|$. 
Let us choose one of the arcs from each $e\in E_0$ and denote $A_+$ as the set of selected arcs. 
Then $|A_+|=|E_0|$ and $a\in A_+$ if and only if $\overline{a}\notin A_+$ holds. 
We set $A_{rep}=A_{0,\sigma} \cup A_+$. 
Let us introduce the map $\iota: \mathbb{C}^{A_0}\to \mathbb{C}^{A_{rep}}$ defined by $(\iota \psi)(a)=\psi(a)$ for any $\psi\in \mathbb{C}^{A_0}$ and $a\in A_{rep}$. 

Let us define the boundary operator $\partial_+ : \mathbb{C}^{A_{rep}}\to \mathbb{C}^{V_0}$ by 
    \[ (\partial_+ \varphi)(u) = \sum_{ t(a)=u \mathrm{\;in\;} A_+}\varphi(a)-\sum_{o(a)=u\mathrm{\;in\;} A_+}\varphi(a) \]
for any $\varphi\in \mathbb{C}^{A_{rep}}$ and $u\in V_0$. 
On the other hand, let us also define the boundary operator $\partial_- : \mathbb{C}^{A_{rep}}\to \mathbb{C}^{V_0}$ by 
    \[ (\partial_- \varphi)(u) 
    =\begin{cases}
    \sum\limits_{t(a)=u}\varphi(a)+\sum\limits_{o(a)=u}\varphi(a) & \text{: $u$ has no selfloop,} \\
    \sum\limits_{t(a)=u}\varphi(a)+\sum\limits_{o(a)=u}\varphi(a) -\varphi(a_s)& \text{: $u$ has a selfloop $a_s$,}
    \end{cases} \]
for any $\varphi\in \mathbb{C}^{A_{rep}}$ and $u\in V_0$. 
We obtain the following lemma. 
%%%%
\begin{lemma}\label{lem:dimension}
Let $G_0=(V_0,A_0)$ be a graph with self-loops. 
We set $E_0$ as the set of support edges of $A_0\setminus A_{0,\sigma}$ such that $E_0=\{ |a| \;|\; a\in A_0\setminus A_{0,\sigma} \}$. 
Then we have 
\[ \dim [\ker(1-E)] = |E_0|-|V_0|+1, \]
\[ \dim [\ker(1+E)] = 
\begin{cases}
|E_0|-|V_0|+1 & \text{: Case A,} \\
|E_0|-|V_0| & \text{: Case B,} \\
|E_0|-|V_0|+|A_{0,\sigma}| & \text{: Cases C and D,}
\end{cases} \]
\end{lemma}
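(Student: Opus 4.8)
The plan is to reduce both kernels to kernels of the purely combinatorial boundary maps $\partial_+$ and $\partial_-$ via Lemma~\ref{lem:HS}, and then compute the two ranks using classical incidence-matrix theory. The starting point is Lemma~\ref{lem:HS}(1): taking $\lambda=1$ gives $\ker(1-E)=\ker(-1-S)\cap\ker d_1$, while taking $\lambda=-1$ (and noting $\ker(1+E)=\ker(-1-E)$) gives $\ker(1+E)=\ker(1-S)\cap\ker d_1$. Thus I must count the odd part $\{\psi:\psi(\overline a)=-\psi(a)\}$ and the even part $\{\psi:\psi(\overline a)=\psi(a)\}$ of $\ker d_1$, where $d_1\psi=0$ is exactly Kirchhoff's current law $\sum_{t(a)=v}\psi(a)=0$ at every $v\in V_0$ (the factor $1/\sqrt{\tilde d(v)}$ does not affect the kernel). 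A key observation is that on a self-loop $\overline a=a$, so an odd function is forced to vanish on $A_{0,\sigma}$, whereas the even condition is vacuous there; this is precisely the source of the $|A_{0,\sigma}|$ term appearing only in $\dim\ker(1+E)$.

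Next I would transport these conditions to $\mathbb{C}^{A_{rep}}$ through $\iota$. For odd $\psi$, the restriction $\iota$ is an isomorphism onto $\{\varphi\in\mathbb{C}^{A_{rep}}:\varphi|_{A_{0,\sigma}}=0\}\cong\mathbb{C}^{A_+}$, and writing each arc at $v$ either as $a\in A_+$ (so $\psi(a)=\iota\psi(a)$) or as $\overline a$ with $o(\overline a)=v$ (so $\psi(a)=-\iota\psi(\overline a)$) shows $\sum_{t(a)=v}\psi(a)=(\partial_+\iota\psi)(v)$; hence $\dim\ker(1-E)=\dim\ker(\partial_+|_{\mathbb{C}^{A_+}})$. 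For even $\psi$, $\iota$ is an isomorphism onto all of $\mathbb{C}^{A_{rep}}$, and the same bookkeeping with a $+$ sign — a self-loop $a_s$ at $v$ contributing $\psi(a_s)$ once to $\sum_{t(a)=v}\psi(a)$, which matches the definition of $\partial_-$, where $a_s$ is counted in both the $t$- and $o$-sums and then subtracted once — gives $\sum_{t(a)=v}\psi(a)=(\partial_-\iota\psi)(v)$; hence $\dim\ker(1+E)=\dim\ker\partial_-$.

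Finally I would compute the two ranks. The map $\partial_+$ on $\mathbb{C}^{A_+}\cong\mathbb{C}^{E_0}$ is the signed incidence matrix of the connected graph $G_0$, whose rank is $|V_0|-1$ and whose kernel is the cycle space, so rank–nullity gives $\dim\ker(1-E)=|E_0|-(|V_0|-1)=|E_0|-|V_0|+1$ in every case (the self-loops being invisible to the odd part). For $\partial_-$ I would read off the columns: a non-loop edge $\{u,v\}$ gives the column $\delta_u^{(V_0)}+\delta_v^{(V_0)}$ and a self-loop at $u$ gives $\delta_u^{(V_0)}$, so $\partial_-$ is the unsigned incidence matrix of $G_0\setminus A_{0,\sigma}$ augmented by the basis vectors indexed by self-loop vertices. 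Its left null space is $\{x:x_u+x_v=0\text{ on every edge},\ x_u=0\text{ at every self-loop vertex}\}$: since $G_0\setminus A_{0,\sigma}$ is connected, the edge equations force $x$ to be constant on each class of a $2$-colouring with opposite signs on the two classes. This leaves a one-dimensional left null space in the bipartite self-loop-free case (Case A), and a trivial one whenever an odd cycle is present (Cases B and D) or a self-loop pins some $x_u=0$ and propagates $x\equiv0$ through the bipartite classes (Case C). Hence $\operatorname{rank}\partial_-$ equals $|V_0|-1$ in Case A and $|V_0|$ in Cases B, C, D, and since $\dim\mathbb{C}^{A_{rep}}=|E_0|+|A_{0,\sigma}|$, rank–nullity yields the four stated values.

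The main obstacle is the rank computation for $\partial_-$: one must correctly combine the bipartite/non-bipartite dichotomy for the unsigned incidence matrix with the extra $x_u=0$ constraints coming from the self-loop columns, and in particular verify that in Case C a single self-loop already annihilates the whole bipartite left null space by connectivity. The identifications via $\iota$ and the translation of $d_1\psi=0$ into $\partial_\pm\iota\psi=0$ are then routine once the sign conventions for odd versus even functions are tracked carefully.
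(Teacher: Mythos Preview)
Your proposal is correct and follows essentially the same route as the paper: both reduce via Lemma~\ref{lem:HS} to $\ker(1\mp S)\cap\ker d_1$, transport to $\ker\partial_\pm$ through $\iota$, and finish with rank--nullity by determining $\ker\partial_\pm^*$ (your ``left null space''). The only cosmetic differences are that you restrict $\partial_+$ to $\mathbb{C}^{A_+}$ from the outset and invoke the classical rank of the signed incidence matrix, whereas the paper works on all of $\mathbb{C}^{A_{rep}}$ and subtracts the $|A_{0,\sigma}|$ self-loop directions of $\ker\partial_+$ afterward; the case analysis for $\ker\partial_-^*$ is identical.
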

%%%
\begin{proof}
Note that if $\psi\in \ker(1+S)$, then $\psi(\overline{a})=-\psi(a)$ for any $a\in A_+$ and if $\psi \in \ker(d)$, then $\sum_{t(a)=u}\psi(a)=0$ for any $u\in V_0$. 
Remark that since $(S\psi)(a_s)=\psi(a_s)$ for any $a_s\in A_{0,\sigma}$, we have $\psi(a_s)=0$ if $\psi\in \ker(1+S)$. 
Therefore if $\psi\in \ker(1+S) \cap \ker(d)$, then 
\[ \sum_{t(a)=u\mathrm{\;in\;}A_{+}} (\iota \psi)(a)-\sum_{o(a)=u\mathrm{\;in\;}A_{+}} (\iota \psi)(a)=(\partial_+\iota \psi)(u)=0  \]
holds. 
Then $\ker(1+S)\cap \ker d$ is isomorphic to $\{ \varphi\in \ker \partial_+ \;|\; \supp(\varphi)\subset A_+ \}$. 
Let us consider $\ker \partial_+$. 
By the definition of $\partial_+$, we have  $\partial_+\delta_a^{(A_{rep})}=0$ for any $a\in A_s$.
Hence, we should eliminate the subspace of $\ker\partial_+$ induced by the self-loops. The dimension of this subspace is $|A_{0,\sigma}|$.
The adjoint operator $\partial^*_+:\mathbb{C}^{V_0}\to \mathbb{C}^{A_+}$ of  $\partial_+$ is described by 
    \[ (\partial^*_+ f)(a) = 
    f(t(a))-f(o(a)) , \]
for any $f\in\mathbb{C}^{V_0}$ and $a\in A_{rep}$. 
If $\partial_+^*f=0$ holds, then $f(t(a))=f(o(a))$ for any $a\in A_+$. This means $f(u)=c$ for any $u\in V_0$  with some non-zero constant $c$. Thus $\dim \ker(\partial_+^*)=1$.  Therefore, the fundamental theorem of linear algebra\footnote{for a linear map $g: X\to Y$, $\dim\ker g=\dim X-\dim Y+\dim \ker g^*$. } implies 
    \begin{align*} 
    \dim\ker(1+S)\cap \ker d &= \dim \ker (\partial_+)-|A_{0,\sigma}|\\
    &= (|A_{rep}|-|V_0|+1)-|A_{0,\sigma}| \\
    &= |E_0|-|V_0|+1.
    \end{align*}

Next, let us consider $\dim(\ker(1-S)\cap\ker d_1)$. 
Note that if $\psi\in \ker(1-S)$, then $\psi(\overline{a})=\psi_(a)$. 
Assume that $\psi\in \ker(1-S)\cap \ker(d_1)$, then 
\[ \sum_{t(a)=u}(\iota\psi)(a)=0 \mathrm{\;for\; any\;} u\in V_0 ,\]
which is equivalent to 
\[ \partial_-\iota\psi=0. \]
The adjoint of $\partial_-$ is described by 
    \[ (\partial_-^*f)(a)=\begin{cases} f(t(a))+f(o(a)) & \text{: $a\in A_+$,} \\ f(t(a)) & \text{: $a\in A_{0,\sigma}$.} \end{cases} \]
Let us consider $f\in \ker(\partial_-^*)$ in the cases for both  $A_{0,\sigma}=\emptyset$ and $A_{0,\sigma}\neq \emptyset$. \\
\noindent {\bf $A_{0,\sigma}=\emptyset$ case}: \\
If $G_0$ is a bipartite graph, then we can decompose the vertex set $V$ into $X \cup Y$, where every edge connects a vertex in $X$ to one in $Y$. 
Then $f(x)=k$ for any $x\in X$ and $f(y)=-k$ for any $y\in Y$ with some nonzero constant $k$. Hence, $\dim \ker(\partial^*)=1$ if $A_{0,\sigma}=\emptyset$ and $G_0$ is bipartite.
On the other hand, if $G_0$ is non-bipartite, then there must exist an odd length fundamental cycle $c=(a_0,a_1,\dots,a_{2m})$. 
We have that
$$
f(o(a_1))=-f(o(a_2))=f(o(a_3))=\cdots=-f(o(a_{2r}))=f(o(a_0))=-f(o(a_1)).
$$
Then $f(u)=0$ for any $u\in V(c)$. Since $G_0$ is connected, the value $0$ is inherited to the other vertices by $f(t(a))=-f(o(a))$. After all, we have $f=0$, which implies $\ker(\partial_-^*)=0$ if $A_{0,\sigma}=\emptyset$ and $G_0$ is non-bipartite.  \\
\noindent{\bf $A_{0,\sigma}\neq \emptyset$ case}: \\
Since $(\partial_-^{*}f)(a)=f(t(a))=0$ if $a\in A_{0,\sigma}$, then $f$ takes the value $0$ at the other vertices since $f(t(a))=-f(o(a))$ for any $a\in A_+$, which implies $\ker(\partial_+^*)=0$ if $A_{0,\sigma}\neq \emptyset$.  \\ 
After all by the fundamental theorem of the linear algebra, 
\begin{align*} 
\dim \ker \partial_- 
&= |A_{rep}|-|V_0|+\begin{cases} 1 & \text{: $A_s=\emptyset$, $G_0$ is bipartite.}\\
0 & \text{: otherwise.}\end{cases}
\end{align*}
Noting that $|A_{rep}|=|E_0|+|A_{0,\sigma}|$, we obtain the desired conclusion.
\end{proof}
%%%
In the following, let us find linearly independent eigenfunctions of $\ker(\pm 1-E)$ using some concepts from graph theory. 
A walk in $G_0$; $p=(a_0,a_1,\dots,a_r)$ is a sequence of arcs with $t(a_j)=o(a_{j+1})$ ($j=0,1,\dots,r-1$), which may have the same arcs in $p$. We set $\{ a_0,a_1,\dots,a_r \}=:A(p)$, and similarly $\overline{A}(p) = \{\overline{a}_0,\dots,\overline{a}_{r}\}$ as a {\it multi} set. 
We describe $\tilde{\xi}_p^{(\pm )}:\{a_0,\dots,a_{r}\}\cup \{\overline{a}_0,\dots,\overline{a}_{r}\}\to \{\pm 1\}$ by 
    \[ \tilde{\xi}_p^{(+)}(a)= \begin{cases} 1 & \text{: $a\in A(p)$,}\\ -1 & \text{: $\overline{a}\in A(p)$,} \\ \end{cases}  \]
    \[ \tilde{\xi}_p^{(-)}(a)= \begin{cases} 1 & \text{: $|a|\in \{|a_j| \;|\; j \mathrm{\;is\;even}\}$,}\\ -1 & \text{: $|a|\in \{|a_j| \;|\; j \mathrm{\;is\;odd} \}$.} \end{cases}  \]
Then we set the functions $\xi_p^{(\pm)}\in \mathbb{C}^A$ by 
    \begin{equation}\label{eq:defxi}  
    \xi_p^{(\pm)}(a)=
    \begin{cases} 
    \sum_{b:\;a=b}\tilde{\xi}_p^{(\pm)}(b) & \text{: $a\in A(p)\cup  \overline{A}(p)$,}\\
    0 & \text{: otherwise.}
    \end{cases}
    \end{equation}
Now we are ready to show the following proposition for $\ker(1-E)$. 
%%%
\begin{proposition}\label{prop:h+}
Let $\xi_c^{(+)}$ be defined as (\ref{eq:defxi}). Then we have 
\[ \ker(1-E) = \mathrm{span}\{ \xi_c^{(+)} \;|\; c\in \Gamma \}. \]
\end{proposition}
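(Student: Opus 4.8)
The plan is to prove the two inclusions separately, upgrading a linearly independent family to a spanning set by a dimension count.

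First I would check the inclusion $\mathrm{span}\{\xi_c^{(+)}\mid c\in\Gamma\}\subseteq\ker(1-E)$. This is essentially already in hand: the earlier proposition establishing $\chi_S^*\xi_c^{(+)}\in\ker(1-U_G)$ gives $U_G\chi_S^*\xi_c^{(+)}=\chi_S^*\xi_c^{(+)}$. Applying $\chi_S$ to both sides and using that $\chi_S\chi_S^*$ is the identity on $\mathbb{C}^{A_0}$, together with $E=\chi_S U_G\chi_S^*$, yields $E\xi_c^{(+)}=\xi_c^{(+)}$, so each $\xi_c^{(+)}\in\ker(1-E)$. Equivalently, one can invoke Lemma~\ref{lem:HS}(1), which identifies $\ker(1-E)=\ker(1+S)\cap\ker d_1$, and verify directly that $S\xi_c^{(+)}=-\xi_c^{(+)}$ from the defining sign pattern of $\xi_c^{(+)}$, while $d_1\xi_c^{(+)}=0$ is exactly Lemma~\ref{lem:boundaries}.

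Next I would establish linear independence of $\{\xi_c^{(+)}\mid c\in\Gamma\}$, using the spanning-tree structure underlying $\Gamma$. Each fundamental cycle $c$ is obtained by adjoining a single non-tree arc (chord) to the spanning tree, and distinct fundamental cycles carry distinct chords. Since a chord belongs to no fundamental cycle other than the one it generates, the function $\xi_c^{(+)}$ takes the values $\pm1$ on the two arcs over the chord of $c$, whereas $\xi_{c'}^{(+)}$ vanishes there whenever $c'\neq c$. Evaluating a vanishing linear combination $\sum_{c}\alpha_c\,\xi_c^{(+)}=0$ on the chord arc of each $c$ therefore forces $\alpha_c=0$, so the family is independent.

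Finally I would match dimensions. By Lemma~\ref{lem:dimension}, $\dim\ker(1-E)=|E_0|-|V_0|+1$, which is precisely the first Betti number of $G_0$, i.e.\ the cardinality $|\Gamma|$ of the set of fundamental cycles. Since the previous two steps exhibit $|\Gamma|$ linearly independent vectors inside the $|\Gamma|$-dimensional space $\ker(1-E)$, they must form a basis, giving the claimed equality. I expect the proof to be short: the inclusion is immediate from earlier results, and the independence is the standard cycle-space argument. The only genuinely load-bearing ingredient is Lemma~\ref{lem:dimension}, which supplies the exact dimension and thereby closes the count; without that precise value one would instead have to argue surjectivity onto $\ker(1+S)\cap\ker d_1$ by hand, which is the step I would regard as the main obstacle if Lemma~\ref{lem:dimension} were unavailable.
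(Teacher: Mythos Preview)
Your proposal is correct and follows essentially the same route as the paper: membership in $\ker(1-E)$ via Lemma~\ref{lem:HS} (or equivalently the earlier proposition on $U_G$), linear independence via evaluation at the non-tree chord arc of each fundamental cycle, and the closing dimension count from Lemma~\ref{lem:dimension}. The only cosmetic difference is that the paper cites Lemma~\ref{lem:HS} directly for the inclusion rather than going through $\chi_S^*\xi_c^{(+)}\in\ker(1-U_G)$, but both arguments appear in the text and are interchangeable here.
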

%%%
\begin{proof}
%Let us put 
%\begin{equation}\label{eq:K}
%\mathcal{K}:=\mathrm{span}\{\xi_c^{(+)} \;|\; c\in \Gamma\}.
%\end{equation}
By the definition of $\xi_c^{(+)}$, we have 
$\xi_c^{(+)}\in \ker d_1 \cap \ker(1-S)$ which implies $\xi_c^{(+)}\in \ker(1-E)$ by Lemma~\ref{lem:HS}. 
We show the linear independence of $\{\xi_c^{(+)}\}_{c\in \Gamma}$. 
Let us set $\Gamma=\{c_1,\dots,c_r\}$ and $\xi_j:=\xi_{c_j}^{(+)}$ ($j=1,\dots,r$) induced by the spanning tree $\mathbb{T}\subset G$. Assume that
\[ \beta_1\xi_1+\cdots+\beta_r \xi_r=0. \]
Put $a_{r}\in A_0(c_{r})\cap (A_0\setminus A(\mathbb{T}))$. 
From the definition of the fundamental cycle, we have 
    \[ \beta_1 \xi_1(a_r)+\cdots+\beta_r \xi_r(a_r)  = \beta_r= 0. \]
In the same way, let $a_{r-1}\in A(c_{r-1})\cap (A_0\setminus A(\mathbb{T}))$, then 
    \[ \beta_1 \xi_1(a_r)+\cdots+ \beta_{r-1} \xi_{r-1}(a_{r-1}) = \beta_{r-1}= 0. \]
Then using it recursively, we obtain $\beta_1=\cdots=\beta_r=0$ which means $\xi_j$'s are linearly independent. 
Then $\dim(\mathcal{K})=|\Gamma|=|E_0|-|V_0|+1$. 
By Lemma~\ref{lem:dimension}, we reached to the conclusion. 
\end{proof}
Define $\Gamma_o,\Gamma_e\subset \Gamma$ as the set of odd and even length fundamental cycles. In the following, to obtain a characterization of $\ker(1+E)=\ker(1-S)\cap \ker (d_1)$, we construct the function $\eta_{x,y}\in \ker(1-S)\cap \ker (d_1)$ which is determined by $x,y\in A_{0,\sigma} \cup \Gamma_o$. 
The main idea to construct such a function is as follows. 
By the definition of $\xi^{(-)}_q$ for any walk $q$, $\xi^{(-)}_q\in \ker(1-S)$. This is equivalent to assigning the symbols ``$+$" and ``$-$" alternatively to each edge along the walk $q$. If the walk $c$ is an even length cycle, then a symbol on each edge of $c$ is different from the ones on the neighbor's edges; this means 
$$
\sum_{t(a)=u}\xi_c^{(-)}(a)=0 ,
$$
for every $u$. Then $\xi_c^{(-)}\in \ker(d_1)\cap \ker(1-S)$ holds.  On the other hand, if the walk $c=(b_1,\dots,b_r)$ is an odd length cycle, then a ``frustration" appears at $u:=o(b_1)$; i.e., 
$$
\sum_{t(a)=u}\xi_c^{(-)}(a)=2.
$$
To vanish this frustration, there are two ways; the first is to make a cancellation by another frustration induced by another odd cycle $c'$; the second one is to push the frustration to a self-loop. That is the reason for why the domains of $x$ and $y$ are $A_{0,\sigma} \cup \Gamma_o$. We give more precise explanations of the constructions as follows. See also Fig.~\ref{Fig:1}.\\
%%%%%%%%%%%%%%

\noindent {\bf Construction of $\eta_{x,y}\in \mathbb{C}^{A_0}$: }\\
The function $\eta_{x,y}$ is described by $\xi^{(-)}_q$ induced by a walk depending on the indexes of $x,y$.  
In this paper, we consider four cases of the domains of $x$ and $y$; (1) $x\in \Gamma_o$, $y\in \Gamma_o$;  (2) $x\in A_{\sigma}$, $y\in A_{\sigma}$; (3) $x\in A_{\sigma}$, $y\in \Gamma_o$; (4) $x\in \Gamma_o$, $y\in A_{\sigma}$.
\begin{enumerate}
\item $x\in \Gamma_o$, $y\in \Gamma_o$ case: \\
If $G_0$ is a bipartite graph, let us fix an odd length fundamental cycle  $c_*=(a_0,\dots,a_{r-1})\in \Gamma_o$ and pick up another $c\in \Gamma_o=(b_0,\dots,b_{s-1})$. 
We set the following walk $q$ and define the function on $\mathbb{C}^{A_0}$; $\xi^{(-)}_q=:\overline{a}a_{c_*-c}$, induced by $c_*,c\in \Gamma_o$: 
\begin{enumerate}
\item $c_0 \cap c\neq \emptyset$ case: 
We set $q$ as the shortest closed walk starting from a vertex $u_0\in V(c_0)\cap V(c)$ and visiting all the vertices of $V(c_0)$ and $V(c)$; that is, $q=(a_i,\dots,a_{i+r},b_{j},\dots,b_{s+j})$. Here $o(a_i)=o(b_j)=u_0$ and the suffices are modulus of $r$ and $s$. 
\item $c_0 \cap c=\emptyset$ case:
Let us fix the shortest path between $c_0$ and $c$ by $p=(p_1,\dots,p_t)$. 
Denoting the vertex in $V(c_*)$ connecting to $p$ by $u_0\in V(c_*)$, we set $q$ by the shortest closed walk $q$ starting from $u_*$ and visiting all the vertices; that is, $q=(a_{i},\dots,a_{r+i},p_0\dots,p_t,b_{j}\dots,b_{s+j},\bar{p}_t\dots,\bar{p}_1)$, where $o(a_i)=t(a_{r+i})=o(p_1)=u_0$, $t(p_t)=o(b_j)=t(b_{s+j})$. 
\end{enumerate}
Note that by the definition of the fundamental cycle, the intersection $c_0 \cap c$ is a path in the case for (1). Since $G_0$ is connected, there is a path connecting $c_*$ to $c$ and we fix such a path for every pair of $(c_*,c)$ in the case for (2). 
\item $x\in A_{\sigma}$ and $y\in A_{\sigma}$ case: \\
If the number of self-loops $|A_{\sigma}|\geq 2$, let us fix a self-loop $a_*$ from $A_{\sigma}$ and a path between $a_*$ to each $a\in A_{\sigma}\setminus\{a_*\}$. 
Let us denote the path between $a_*$ and $a$ by $p=(p_1,\dots,p_{t})$. 
Then we set the walk from $a_*$ to $a$ by $q=(a_*,p_1,\dots,p_t,a)$ and $\xi^{(-)}_q=: \eta_{a_*-a}$. 
\item $x\in A_{\sigma}$ and $y\in \Gamma_o$ case: \\
If $|A_{\sigma}|\geq 1$ and $G\setminus A_{\sigma}$ is a non-bipartite graph, let us fix a self-loop $a_*$ and pick up an odd cycle $c=(b_1,\dots,b_t)\in \Gamma_o$;  
if the self-loop $o(a_*)\in V(c)$, we set the walk starting from $a_*$ visiting all the vertices $V(c)$ and returning back to $a_*$ by $q=(a_*,b_1,\dots,b_t, a_*)$; 
while $o(a_*)\notin V(c)$, let us fix a path $p=(p_1,\dots,p_t)$ between $o(a_*)$ and $o(b_1)$ and set the walk starting from $a_*$ visiting all the vertices $V(p)\cup V(c)$ and returning back to $a_*$; $q=(a_*,p_1,\dots,p_t,b_0\dots,b_t,\bar{p}_t,\dots,\bar{p}_1,a_*)$. 
Then we set $\xi_q^{(-)}=: \eta_{a_*,c}$. 
\item $x\in \Gamma_o$ and $y\in A_{\sigma}$ case: \\
Let us fix an odd length fundamental cycle $c_*\in \Gamma_o=(b_1,\dots,b_{s-1})$ and pick up a self-loop $a\in A_{\sigma}$. Let us set a short length path $p$ between $o(a)$ and $o(b_1)$. Then we consider the same walk $q$ as in (3) and set $\xi_q^{(-)}=: \eta_{c_*,a}. $ 
\end{enumerate}
%%%%%%%%%%%%%%%%%%%%
By the construction, we have $\eta_{x,y}\in \ker(1-S)\cap \ker(d_1)$. 
Using the function $\eta_{x,y}$, we obtain the following characterization of $\ker(-1-E)$. 
%%%%%
\begin{proposition}\label{prop:h-}
Let $\xi_c^{(-)}$ be defined by (\ref{eq:defxi}) and $\eta_{x,y}$ be the above.  
Let us fix $a_*\in A_{\sigma}$ and $c_*\in \Gamma_o$.  
Then we have 
\[ \ker(1+E) = 
\begin{cases}
\mathrm{span}\{ \xi_c^{(-)} \;|\; c\in \Gamma \} & \text{: Case (A),}\\
\mathrm{span}\{ \xi_c^{(-)} \;|\; c\in \Gamma_e \} \oplus \mathrm{span}\{ \eta_{c_*-c} \;|\; c\in \Gamma_o\setminus\{c_*\} \} & \text{: Case (B),}\\
\mathrm{span}\{ \xi_c^{(-)} \;|\; c\in \Gamma \} \oplus \mathrm{span}\{ \eta_{a_*-a} \;|\; a\in A_{0,\sigma}\setminus\{a_*\} \} & \text{: Case (C),}\\
\mathrm{span}\{ \xi_c^{(-)} \;|\; c\in \Gamma_e \} \oplus \mathrm{span}\{ \eta_{a_*-y} \;|\; y\in \Gamma_o\cup (A_{0,\sigma}\setminus\{a_*\})  \} & \text{: Case (D).}
\end{cases} \]
\end{proposition}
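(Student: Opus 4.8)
The plan is to reduce the claimed characterization to a dimension count, exactly as in the proof of Proposition~\ref{prop:h+}, but now splitting into the four cases. First I would invoke Lemma~\ref{lem:HS}(1) with $\lambda=-1$, which gives $\ker(1+E)=\ker(-1-E)=\ker(1-S)\cap\ker d_1$; so it suffices to prove that in each case the listed functions form a basis of this intersection. The three ingredients are: (i) \emph{membership}, that every listed generator lies in $\ker(1-S)\cap\ker d_1$; (ii) \emph{linear independence} of the generators; and (iii) that the \emph{number} of generators equals $\dim\ker(1+E)$ as computed in Lemma~\ref{lem:dimension}. Once (i)--(iii) are in place, a linearly independent family of the correct size inside $\ker(1+E)$ is automatically a basis, which yields the four displayed equalities.

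For (i), note that $\xi_c^{(-)}\in\ker(1-S)$ for every $c$, since $\tilde\xi_c^{(-)}$ depends only on the support edge $|a|$ and is therefore $S$-invariant; moreover, when $c$ is an even cycle the two arcs entering each vertex sit at consecutive, hence oppositely signed, positions of $c$, so $(d_1\xi_c^{(-)})(v)=0$ and $\xi_c^{(-)}\in\ker d_1$. For the functions $\eta_{x,y}$, membership in $\ker(1-S)\cap\ker d_1$ is precisely what the construction preceding the proposition secures: each is of the form $\xi_q^{(-)}$ for a walk $q$ chosen so that the frustration $\sum_{t(a)=u}\xi^{(-)}(a)=\pm2$ produced by an odd cycle is either cancelled against a second odd cycle (Case B, and the odd part of Case D) or pushed into a self-loop where it is absorbed (Cases C and D). Thus each listed span is a subspace of $\ker(1+E)$.

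For (ii), I would fix a spanning tree $\mathbb{T}$ of $G_0$ and route every connecting path used in the construction of the $\eta$'s inside $\mathbb{T}$. Then each fundamental cycle $c$ carries a unique non-tree arc $a_c$ that lies in no other fundamental cycle and on no tree path, so the unique generator associated with $c$, namely $\xi_c^{(-)}$ when $c\in\Gamma_e$ or the $\eta$-function carrying $c$ as its second index when $c\in\Gamma_o$, is the only listed function not vanishing at $a_c$; likewise $\eta_{a_*-a}$ is the only generator supported on the self-loop $a$. Evaluating a vanishing linear combination successively at these private arcs and self-loops peels off the coefficients one at a time, forcing them all to be zero; this is the same peeling argument as in Proposition~\ref{prop:h+}.

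Finally, for (iii) I would count: Case A gives $|\Gamma|=|E_0|-|V_0|+1$; Case B gives $|\Gamma_e|+(|\Gamma_o|-1)=|\Gamma|-1=|E_0|-|V_0|$; Case C gives $|\Gamma|+(|A_{0,\sigma}|-1)=|E_0|-|V_0|+|A_{0,\sigma}|$; and Case D gives $|\Gamma_e|+(|\Gamma_o|+|A_{0,\sigma}|-1)=|E_0|-|V_0|+|A_{0,\sigma}|$. Each matches the corresponding value in Lemma~\ref{lem:dimension}, completing the argument. The main obstacle I anticipate lies in (i) and (ii) jointly: making the frustration-cancellation claim for the $\eta$'s fully rigorous, that the odd-cycle or self-loop frustrations really sum to zero at \emph{every} vertex, including along the doubled connecting paths where a support edge is traversed twice and the multiplicity sum in the definition of $\xi_q^{(-)}$ must be tracked, and simultaneously arranging the anchor choices ($c_*$ in Case B, $a_*$ in Cases C and D) together with the in-tree routing so that the private-arc disjointness underpinning the peeling argument holds for all generators at once.
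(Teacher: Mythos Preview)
Your proposal is correct and follows essentially the same three-step strategy as the paper's proof: membership of the generators in $\ker(1-S)\cap\ker d_1$ (which the paper records just before the proposition statement), linear independence, and matching the count against Lemma~\ref{lem:dimension}. The paper disposes of linear independence in a single sentence (``From the construction of $\eta_{x,y}$ and $\xi_c^{(-)}$, the linear independence is immediately obtained''), whereas you spell out the peeling argument via private non-tree arcs and private self-loops, in direct analogy with Proposition~\ref{prop:h+}; your version is more explicit but not a different route.
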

\begin{proof}
We put 
\begin{align}
    \mathcal{A} &:= \mathrm{span}\{ \xi_c^{(-)} \;|\; c\in \Gamma \}, \label{A}\\
    \mathcal{B} &:= \mathrm{span}\{ \xi_c^{(-)} \;|\; c\in \Gamma_e \} \oplus \mathrm{span}\{ \eta_{c_*-c} \;|\; c\in \Gamma_o\setminus\{c_*\} \}, \label{B}\\
    \mathcal{C} &:=  \mathrm{span}\{ \xi_c^{(-)} \;|\; c\in \Gamma \} \oplus \mathrm{span}\{ \eta_{a_*-a} \;|\; a\in A_{0,\sigma}\setminus\{a_*\} \}, \label{C}\\
    \mathcal{D} &:= \mathrm{span}\{ \xi_c^{(-)} \;|\; c\in \Gamma_e \} \oplus \mathrm{span}\{ \eta_{a_*-y} \;|\; y\in \Gamma_o\cup (A_{0,\sigma}\setminus\{a_*\})  \}.\label{D}
 \end{align}
See also Figure~\ref{Fig:4}. 
 From the construction of $\eta_{x,y}$ and $\xi_c^{(-)}$, the linear independence is immediately obtained. 
Let us check the dimensions for each case. \\
In Case (A); 
\[\dim(\mathcal{A})=|\Gamma|=|E_0|-|V_0|+1.\] 
In Case (B); \[\dim(\mathcal{B})=|\Gamma_e|+(|\Gamma_o|-1)=|E_0|-|V_0|.\]
In Case (C); 
\[\dim(\mathcal{C})=|\Gamma|+(|A_{0,\sigma}|-1)=|E_0|-|V_0|+|A_{0,\sigma}|.\] 
In Case (D); 
\[
\dim(\mathcal{D})=|\Gamma_e|+(|\Gamma_o|-1)+ (|A_{0,\sigma}|-1)=|E_0|-|V_0|+|A_{0,\sigma}|.
\]
By Lemma~\ref{lem:dimension}, we reached to the conclusion. 
\end{proof}
\begin{remark}
 ``$M \oplus N$" in Proposition~\ref{prop:h-} means that $M$ and $N$ are just  complementary spaces; the orthogonality is not ensured in general.  
\end{remark}
\begin{remark}
If $|\Gamma_o|=1$ in Case (B), we have $\mathcal{B}=\spann\{\xi_c^{(-)} \;|\; c\in \Gamma_e\}$. 
If $|A_{0,\sigma}|=1$ in Case (C), we have $\mathcal{C}=\spann\{\xi_c^{(-)} \;|\; c\in \Gamma\}$. 
\end{remark}
\begin{remark}
The subspace $\mathcal{D}$ can be reexpressed by 
    \[\mathcal{D}=\mathrm{span}\{ \xi_c^{(-)} \;|\; c\in \Gamma_e \} \oplus \mathrm{span}\{ \eta_{c_*-y} \;|\; y\in (\Gamma_o\setminus\{c_*\}) \cup A_{0,\sigma}  \}.\]
\end{remark}

\begin{figure}[htbp]
    \centering
    \includegraphics[width=15.0cm]{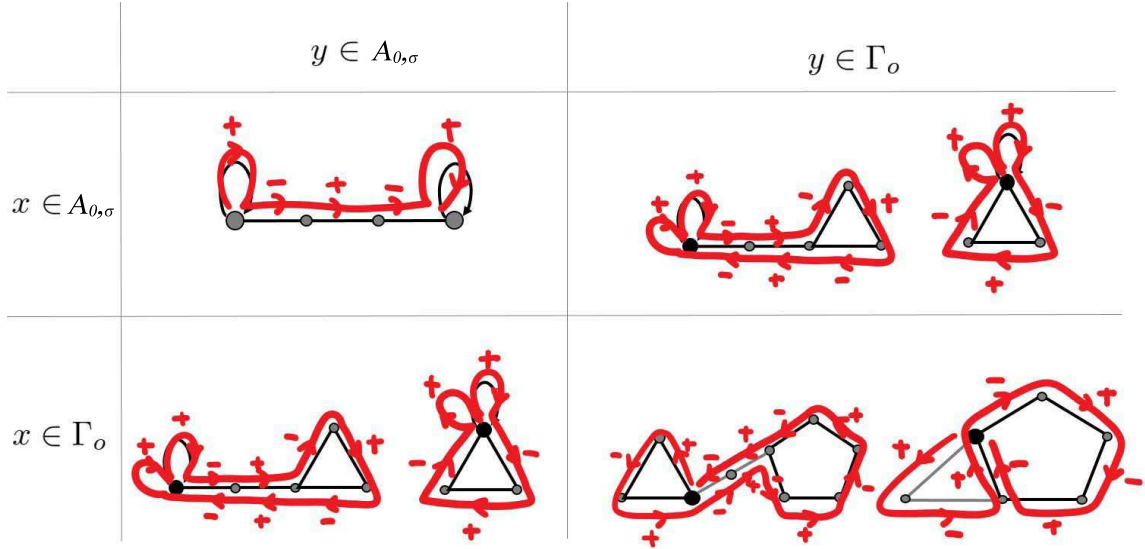}
    \caption{ {\bf Construction of eigenfunction $\eta_{x,y}\in \mathbb{C}^{A_0}$}: Each graph with signs $\pm$ represents the function $\eta_{x,y}$. The support of $\eta_{x,y}$ is included in the arcs of each graphs. The signs are the return values of this function at each arcs. The return values of the inverse arcs are the same as the original arcs. The sings are assigned alternatively along the red colored walks. At each time where the walk runs through an arc, we take the sum of the signs; e.g, in the case for $x\in A_{0,\sigma},y\in \Gamma_o$, the walk runs through the self-loop twice, then the return value at the self-loops of the function is $1+1=2$.  }
    \label{Fig:1}
\end{figure}
   
\begin{figure}[htbp]
    \centering
    \includegraphics[width=15.0cm]{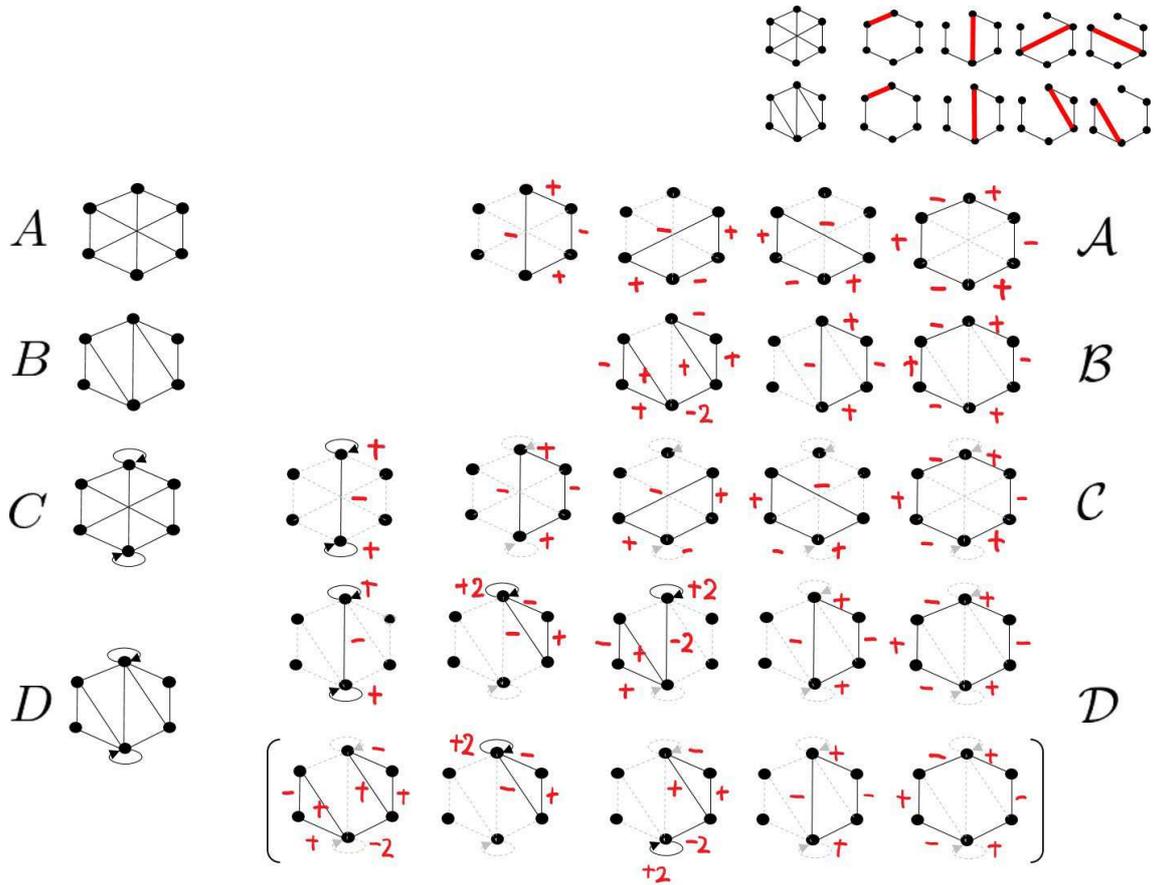}
    \caption{ {\bf Eigenspaces $\mathcal{A},\mathcal{B},\mathcal{C},\mathcal{D}$}: This figure shows examples of four graphs for the cases $A$, $B$, $C$, $D$, and its induced eigenspaces of the Grover walk $\mathcal{A}$, $\mathcal{B}$, $\mathcal{C}$, $\mathcal{D}$. The figures at the right corner are the fundamental cycles for each case. The weighted graphs represent bases of each eigenspace. The weights are the return values at each arcs of the bases, where every base takes the value $0$ at the dashed arcs. }
    \label{Fig:4}
\end{figure}
   
\section{Conclusions}   

We have investigated the Grover walk on a finite graph $G$ with sinks using its connection with the walk on the graph $G_0$ with tails. It was shown that the centered generalized eigenspace of the Grover walk with tails corresponds to the attractor space of the Grover walk with sinks, i.e., it contains all trapped states which do not contribute to the transport of the quantum walker into the sink. Consequently, the attractor space of the Grover walk with sinks can be characterized using the persistent eigenspace of the underlying random walk whose supports have no overlaps to the boundary and the concept of ``flow" from the graph theory. In particular, we have constructed linearly independent basis vectors of the attractor space using the properties of fundamental cycles of $G_0$. The attractor space can be divided into subspaces $\mathcal{T}$ and $\mathcal{K}$, corresponding to the eigenvalues $\lambda\neq \pm 1$ and $\lambda = 1$, respectively, and an additional subspace which belongs to the eigenvalue $\lambda = -1$. While the basis of $\mathcal{T}$ and $\mathcal{K}$ can be constructed using the same procedure for all finite connected graphs $G_0$, for the last subspace we provided a construction based on case separation, depending on if the graph is bipartite or not and if it involves self-loops.

The use of fundamental cycles have allowed us to considerably expand the results previously found in the literature, which were often limited to planar graphs. The derived construction of the attractor space enables better understanding of the quantum transport models on graphs. In addition, our results have revealed that the attractor space can contain subspaces of eigenvalues different from $\lambda = \pm 1$. In such a case the evolution of the Grover walk with sink will have more complex asymptotic cycle. In fact, the example we have presented in Section~\ref{sec:ex} exhibits an infinite asymptotic cycle, since the phase $\theta$ of the eigenvalues $\lambda_\pm \neq \pm 1$  is not a rational multiple of $\pi$. This feature is missing, e.g., in the Grover walk on dynamically percolated graphs with sinks, where the evolution converges to a steady state.

\section*{Acknowledgement}

ES acknowledges financial supports from the Grant-in-Aid of
Scientific Research (C)  No.~JP19K03616, Japan Society for the Promotion of Science and Research Origin for Dressed Photon.
M\v S is grateful for the financial support from M\v SMT RVO 14000. This publication was funded by the project ``Centre for Advanced Applied Sciences",\\ Registry No. CZ.$02.1.01/0.0/0.0/16\_019/0000778$, supported by the Operational Programme Research, Development and Education, co-financed by the European Structural and Investment Funds and the state budget of the Czech Republic.

%%%%%%%%%%%%%%%%%%%%%%%%%%%%%%%%%
\bibliographystyle{jplain}

\begin{thebibliography}{99}
%\begin{thebibliography}{}


\bibitem{Ambainis2003} 
Ambainis, A.: 
Quantum walks and their algorithmic applications,  
Int. J. Quantum Inf. {\bf 1} (2003) pp.507--518 (2003).

\bibitem{Ambainis}
Ambainis A., Bach E., Nayak A., Vishwanath A., and Watrous J.: One-Dimensional Quantum Walks, Proc. 33rd Annual ACM Symp. on Theory of Computing (2001) pp. 37.

\bibitem{Konno}
Konno N., Namiki T., Soshi T., and Sudbury A.: Absorption problems for quantum walks in one dimension, J. Phys. A: Math. Gen. {\bf 36} (2003) 241.

\bibitem{Bach}
Bach E., Coppersmith S., Goldschen M.P., Joynt R., and Watrous J.: One-dimensional quantum walks with absorbing boundaries, J. Comput. Sys. Sci. {\bf 69} 2004 562.

\bibitem{Yamasaki}
Yamasaki T., Kobayashi H. and Imai H.: Analysis of absorbing times of quantum walks, Phys. Rev. A {\bf 68} (2003) 012302.

\bibitem{IKS}
Inui N., Konno N., and Segawa E.: One-dimensional three-state quantum walk, Phys. Rev. E {\bf 72} (2005) 056112.

\bibitem{SNJ}
Štefaňák M., Novotný J., and Jex I.: Percolation assisted excitation transport in discrete-time quantum walks, New J. Phys. {\bf 18} (2016) 023040.

\bibitem{MNJ}%%
Mareš J., Novotný J., and Jex I.:
Percolated quantum walks with a general shift operator: From trapping to transport, Phys. Rev. A {\bf 99}, 042129 (2019).

\bibitem{MNSJ}
Mareš J., Novotný J., Štefaňák M., and Jex I.: 
A counterintuitive role of geometry in transport by quantum walks, Phys. Rev. A {\bf 101} (2020) 032113.

\bibitem{MNJ:2020}
Mareš J., Novotný J., and Jex I.: Quantum walk transport on carbon nanotube structures, Phys. Lett. A {\bf 384} (2020) 126302.

\bibitem{DressedPhoton0}
Ohtsu, M., Kobayashi, K., Kawazoe, T.,  Yatsui, T., Naruse, M.: Principles of Nanophotonics (Taylor and Francis, Boca Raton, 2008)

\bibitem{DressedPhoton1}
Nomura, W.,  Yatsui, T., Kawazoe, T.,  Naruse, M and Ohtsu, M.:
Structural dependency of optical excitation transfer via optical near-field interactions between semiconductor quantum dots,
Applied Physics B 100 pp. 181--187
(2010)

\bibitem{Krovi:hypercube}
Krovi H., and Brun T. A.: Hitting time for quantum walks on the hypercube, Phys. Rev. A {\bf 73} (2006) 032341.

\bibitem{Krovi:infhit}
Krovi H., and Brun T. A.: Quantum walks with infinite hitting times, Phys. Rev. A {\bf 74} (2006) 042334.

\bibitem{HKSS}
Higuchi, Yu., Konno, N., Sato, I., Segawa, E., Spectral and asymptotic properties of Grover walks on crystal lattices”, Journal of Functional Analysis 267 (2014) 4197–4235.


\bibitem{FH1}
Feldman, E., Hillery, M.: Quantum walks on graphs and quantum scattering theory, In: Coding Theory and Quantum Computing, edited by D. Evans, J. Holt, C. Jones, K. Klintworth,
B. Parshall, O. Pfister, and H. Ward, Contemp. Math. {\bf 381} (2005) pp.71–96.

\bibitem{FH2}
Feldman, E., Hillery, M.: Modifying quantum walks: a scattering theory approach, J. Phys. A: Math. Theor. {\bf 40} (2007) 11343–11359.

\bibitem{HamSai}
M. Hamano, H. Saigo,
Quantum walk and dressed photon, 
In Proceedings 9th International Conference on Quantum Simulation and Quantum Walks (QSQW 2020), Marseille, France, 20-24/01/2020,
Electronic Proceedings in Theoretical Computer Science 315, pp. 93--99.

\bibitem{HS}
Higuchi, Yu., Segawa, E.: 
Dynamical system induced by quantum walks,
Journal of Physics A: Mathematical and Theoretical {\bf 52} (39) (2019).

\bibitem{HSS}
Higuchi, Yu., Mohamed, S., Segawa, E.:
Electric circuit induced by quantum walk,
Journal of Statistical Physics {\bf 181} pp.603–617 (2020). 

\bibitem{Kato1982}
Kato, T.:
A Short Introduction to Perturbation Theory for Linear Operators,
Springer-Verlag, New York (1982).

\bibitem{Konno2008b} 
Konno, N.: 
Quantum Walks, In: Lecture Notes in Mathematics: {\bf 1954} (2008) pp.309--452, Springer-Verlag, Heidelberg.




\bibitem{NAJ}
Novotný J., Alber G., and Jex I.: Asymptotic evolution of random unitary operator, Cent. Eur. J. Phys. {\bf 8} (2010) pp.1001--1014. 


\bibitem{NYKNO}
Nomura W., Yatsui T., Kawazoe T., Naruse M, and Ohtsu M., 
Appl. Phys. B {\bf 100} (2010) pp.181-187.

\bibitem{Por}
Portugal, R.:
Quantum Walk and Search Algorithms 2nd Ed., 
Springer Nature Switzerland (2018)

\bibitem{R}
Robinson, M.: Dynamical Systems: Stability, Symbolic dynamics, and Chaos, CRC Press
(1995).




\end{thebibliography}

\end{document}